\def\cN{\mathcal{N}}
\def\cB{B}
\def\cA{\mathcal{A}}
\def\cC{\mathcal{C}}
\def\cI{I}
\def\cH{\mathcal{H}}
\def\cE{\mathcal{E}}
\def\cS{\mathcal{S}}
\def\cD{\mathcal{D}}
\def\bN{\mathbb{N}}
\def\sC{\mathscr{C}}
\def\sG{\mathscr{G}}
\def\sH{\mathscr{H}}
\def\dom{{\sc Dom-Enum}}
\def\transhyp{{\sc Trans-Enum}}
\def\TDS{{\sc TDom-Enum}}
\def\CDS{{\sc CDom-Enum}}
\newcommand{\eg}{\emph{e.g.}\xspace}
\newcommand{\ie}{\emph{i.e.}\xspace}
\newcommand{\macro}[3]{\newcommand{#1}[#3]{#2}}
\macro{\size}{\|#1\|}{1}
\newtheorem{thm}{Theorem}
\newtheorem{lem}[thm]{Lemma}
\newtheorem{defn}[thm]{Definition}
\newtheorem{cor}[thm]{Corollary}
\newtheorem{prop}[thm]{Proposition}
\newtheorem{Claim}[thm]{Claim}
\newtheorem{rem}[thm]{Remark}
\title[The \dom\ Problem and Related Notions]{On the Enumeration of Minimal Dominating Sets and Related Notions}
\author[M.M. Kanté \and V. Limouzy\and A. Mary \and L. Nourine]{Mamadou Moustapha Kant\'e \and Vincent Limouzy \and Arnaud
  Mary \and Lhouari Nourine}
\address{Clermont-Universit\'e, Universit\'e Blaise Pascal, LIMOS,
  CNRS, France}
\email{\{mamadou.kante,limouzy,mary,nourine\}@isima.fr}
\thanks{M.M. Kanté and V. Limouzy are supported by the French Agency
  for Research under the DORSO project.}
\thanks{A. Mary and L. Nourine are partially supported by the
  French Agency for Research under DEFIS program DAG, ANR-09-DEFIS,
  2009-2012.}
\thanks{A preliminary version of Sections 4 and 5 appeared in the proceedings of FCT 2011.}
\begin{document}

 \begin{abstract} 
A dominating set $D$ in a graph is a subset of its vertex set such
that each vertex is either in $D$ or has a neighbour in $D$. In this
paper, we are interested in the enumeration of (inclusion-wise) minimal
dominating sets in graphs, called the \dom\ problem. It is well known
that this problem can be polynomially reduced to the \transhyp\ problem in
hypergraphs, i.e., the problem of enumerating all minimal transversals
in a hypergraph.  Firstly we show that the \transhyp\ problem can be
polynomially reduced to the \dom\ problem. As a consequence there
exists an output-polynomial time algorithm for the \transhyp\ problem
if and only if there exists one for the \dom\ problem.  Secondly, we
study the \dom\ problem in some graph classes.  We give an
output-polynomial time algorithm for the \dom\ problem in split graphs, and
 introduce the completion of a graph to obtain an output-polynomial time
algorithm for the \dom\ problem in $P_6$-free chordal graphs, a proper
superclass of split graphs. Finally, we investigate the complexity of
the enumeration of (inclusion-wise) minimal connected dominating sets and minimal total dominating sets
of graphs. We show that there exists an output-polynomial time
algorithm for the \dom\ problem (or equivalently \transhyp\ problem) if
and only if there exists one for the following enumeration problems: minimal total dominating sets, minimal total
dominating sets in split graphs,
minimal connected dominating sets in split graphs, minimal dominating sets in
co-bipartite graphs.

 \end{abstract}

\maketitle
\section{Introduction}\label{sec:1}

The \textsc{Minimum Dominating Set} problem is a classic and
well-studied graph optimisation problem. \emph{A} \emph{dominating
  set} in a graph $G$ is a subset $D$ of its set of vertices such that
each vertex is either in $D$ or has a neighbour in $D$. Computing a
minimum dominating set has numerous applications in many areas, \eg,
networks, graph theory (see for instance the book \cite{HHS98}). In this
paper we are interested in the enumeration of \emph{minimal
  (connected, total) dominating sets} in graphs.

Enumeration problems have received much interest over the past
decades due to their applications in computer science
\cite{AgrawalIS93,DiehlJR93,wille,GunopulosKMT97,ruskey}.  For these
problems the size of the output may be exponential in the size of the
input, which in general is different from optimisation or counting
problems where the size of the output is polynomially related to the
size of the input. A natural parameter for measuring the time
complexity of an enumeration algorithm is the sum of the sizes of the input and
output. An algorithm whose running time is bounded by a polynomial
depending on the sum of  the sizes of the input and output is called an
\emph{output-polynomial time} algorithm (also called
\emph{total-polynomial time} or \emph{output-sensitive} algorithm).

The enumeration of minimal dominating sets of graphs (\dom\ problem for short) is closely related to the well-known {\sc Trans-Enum} problem in hypergraphs, which consists in enumerating the set of
minimal \emph{transversals} (or \emph{hitting sets}) of a hypergraph.  A transversal of a hypergraph is a subset of its ground set which has a non empty intersection with every hyperedge. One can
notice that the set of minimal dominating sets of a graph is in bijection with the set of minimal transversals of its \emph{closed neighbourhood hypergraph} \cite{BLS99}.  The {\sc Trans-Enum} problem
has been intensively studied due to its connections to several problems in such fields as data-mining and learning \cite{EG95,EGM03,GunopulosKMT97,BEKG06,nourine12}.  It is still open whether there
exists an output-polynomial time algorithm for the {\sc Trans-Enum} problem, but several classes where an output-polynomial time algorithm exists have been identified (see for instance the survey
\cite{EGM08}). So, classes of graphs whose closed neighbourhood hypergraphs are in one of these identified classes of hypergraphs admit also output-polynomial time algorithms for the \dom\
problem. Examples of such graph classes are planar graphs and bounded degree graphs (see \cite{KLMN11,KLMN12} for more information).  Recently, the \dom\ problem has been studied by several groups of
authors \cite{CouturierHHK13,FGPS08}. Their research on exact exponential-time algorithms triggered a new approach to the design of enumeration algorithms which uses classical worst-case running time
analysis, i.e., the running time depends on the length of the input.
%For example, the current best algorithm for the enumeration of minimal dominating sets runs in time $O(1.7159^n)$ where $n$ is the number of vertices.

In this paper, we first prove that the {\sc Trans-Enum} problem can be
polynomially reduced to the \dom\ problem. Since the other direction
also holds, the two problems are \emph{equivalent}, \ie, there exists
an output-polynomial time algorithm for the \dom\ problem if and only
if there exists one for the {\sc Trans-Enum} problem. One could possibly
expect to benefit from graph theory tools to solve the two problems
and at the same time many other enumeration problems equivalent to the
{\sc Trans-Enum} problem (see \cite{EG95} for examples of problems
equivalent to {\sc Trans-Enum}).  In addition, we show that there exists an output-polynomial time algorithm for
the \dom\ problem (or equivalently \transhyp\ problem) if and only if
there exists one for the following enumeration problems: \TDS\ problem,
\CDS\ in split graphs, \TDS\ in split graphs, \dom\ in co-bipartite
graphs, where the \TDS\ problem corresponds to the enumeration of minimal \emph{total dominating} sets.

We then characterise graphs where the addition of edges changes the set of minimal dominating sets. The
maximal extension (addition of edges) that keeps invariant the set of
minimal dominating sets can be computed in polynomial time, and appears
to be a useful tool for getting output-polynomial time algorithms for
the \dom\ problem in new graph classes such as $P_6$-free chordal
graphs. As a consequence, \dom\ in split graphs  and \dom\ in $P_6$-free chordal graphs are  linear delay and polynomial space.
%
%
%We finally study the complexity of the enumeration of some variants of
%the \dom\ problem, namely the enumeration of minimal \emph{connected dominating} sets (called the \CDS\ problem) 
%and of \emph{total dominating} sets (called the \TDS\ problem). The \textsc{Minimum Connected Dominating Set} problem is a well-known and well-studied variant of the \textsc{Minimum Dominating Set} problem due to its
% applications in networks \cite{HHS98,WL01}.  We have proved in \cite{KLMN11} that \CDS\ in split graphs is
%equivalent to the {\sc Trans-Enum} problem. We will extend this result to
%other graph classes. Indeed, we prove that the minimal
%connected dominating sets of a graph are the minimal transversals of
%its minimal separators. As a consequence, in any class of graphs with
%a polynomially bounded number of minimal separators, the \CDS\ problem
%can be polynomially reduced to the {\sc Trans-Enum} problem; examples of
%such classes are chordal graphs, circle graphs and circular-arc
%graphs \cite{BKKM98,KK95,KKW98}. Finally, we show that the \CDS\ problem is harder than the \dom\ problem.
%

We finally study the complexity of the enumeration  of minimal \emph{connected dominating} sets (called the \CDS\ problem). 
The \textsc{Minimum Connected Dominating Set} problem is a well-known and well-studied variant of the \textsc{Minimum Dominating Set} problem due to its
 applications in networks \cite{HHS98,WL01}.  We have proved in \cite{KLMN11} that \CDS\ in split graphs is
equivalent to the {\sc Trans-Enum} problem. We will extend this result to
other graph classes. Indeed, we prove that the minimal
connected dominating sets of a graph are the minimal transversals of
its minimal separators. As a consequence, in any class of graphs with
a polynomially bounded number of minimal separators, the \CDS\ problem
can be polynomially reduced to the {\sc Trans-Enum} problem; examples of
such classes are chordal graphs, circle graphs and circular-arc
graphs \cite{BKKM98,KK95,KKW98}. Finally, we show that the \CDS\ problem is harder than the \dom\ problem.

\medskip

\paragraph{\bf Paper Organisation.} Some needed definitions are defined in Section
\ref{sec:2}. The equivalence between the {\sc Trans-Enum} problem, the
\dom\ problem and the {\sc TDom-Enum} problem is given in Section \ref{sec:3}. We recall in Section
\ref{sec:4} the output-polynomial time algorithm for the \dom\ problem
in split graphs published in \cite{KLMN11}.  Maximal extensions
(additions of edges) of graphs are defined in Section \ref{sec:6} and
a use of these maximal extensions to obtain an output-polynomial time
algorithm for the \dom\ problem in $P_6$-free chordal graphs is also
given. The \CDS\ problem is investigated in Section \ref{sec:7}.

\section{Preliminaries}\label{sec:2}

If $A$ and $B$ are two sets, $A\setminus B$ denotes the set $\{x\in
A\mid x\notin B\}$. The power-set of a set $V$ is denoted by $2^V$.
We denote by $\bN$ the set containing zero and the positive
integers. The size of a set $A$ is denoted by $|A|$. 
%Two sets $A$ and $B$ \emph{overlap} if $A\cap B,\ A\setminus B$ and $B\setminus A$ are non empty. 

We refer to \cite{Diestel2005} for graph terminology not defined
below; all graphs considered in this paper are undirected, finite and
simple.  A graph $G$ is a pair $(V(G),E(G))$, where $V(G)$ is the set
of vertices and $E(G)\subseteq V(G)\times V(G)$, the set of edges, is
symmetric.  An edge between $x$ and $y$ is denoted by $xy$
(equivalently $yx$).  The subgraph of $G$ induced by $X\subseteq
V(G)$, denoted by $G[X]$, is the graph $(X,E(G)\cap (X\times X))$;
$G\setminus X$ is the graph $G[V(G)\setminus X]$. A graph is said to
be \emph{chordal} if it has no induced cycle of length greater than or
equal to $4$; it is a \emph{split} graph if its vertex set can be
partitioned into an independent set $S$ and a clique $C$. Notice that
split graphs form a proper subclass of chordal graphs. For two graphs
$G$ and $H$, we say that $G$ is \emph{H-free} if $G$ does not contain
$H$ as an induced subgraph.  For $k\geq 1$, we let $P_k$ be the path
on $k$ vertices. For a graph $G$, we let $N_G(x)$, the set of
neighbours of $x$, be the set $\{y\in V(G)\mid xy\in E(G)\}$, and we
let $N_G[x]$ be $N_G(x)\cup \{x\}$. For $X\subseteq V(G)$, we write
$N_G[X]$ and $N_G(X)$ for respectively $\bigcup\limits_{x\in X}
N_G[x]$ and $N_G[X]\setminus X$.

A \emph{dominating set} in a graph $G$ is a set of vertices $D$ such that every
vertex of $G$ is either in $D$ or is adjacent to some vertex of $D$. It is said
to be \emph{minimal} if  it does not contain any other dominating set as a subset.
%for any $x\in D$, $D\setminus \{x\}$ is not a dominating set.  
The set of all minimal dominating sets of $G$ will be denoted
by $\mathcal{D}(G)$.  Let $D$ be a dominating set of $G$ and $x \in D$. We say
that $x$ has a \emph{private neighbour} $y$ in $G$ if $y\in N_G[x]\setminus
N_G[D\setminus \{x\}]$. Note that a private neighbour of a vertex $x\in
	D$ in $G$ is either $x$ itself, or a vertex in $V(G)\setminus D$, but never a vertex
	$y\in D\setminus \{x\}$. The set of private neighbours of $x\in D$ in $G$ is
denoted by $P_D(x)$. The following is straightforward.

\begin{lem} \label{lem:2.1}
Let $D$ be a dominating set of a graph $G$. Then  $D$ is a minimal dominating set if and only if  $P_D(x)\neq \emptyset$ for every
$x\in D$.
\end{lem}

% \begin{lem}[Folklore]
% 	\label{lemx}Let $G$ be a graph and $x\in V(G)$. Then there
%   exists $D\in \cD(G)$ such that $x\in D$.
% \end{lem}
% 
% \begin{proof} Let $x\in V(G)$. Then $G\setminus N_G(x)$ is a dominating set of
% 	$G$. Moreover, any minimal dominating set of $G\setminus N_G(x)$ contains
% 	$x$, and is therefore also a minimal dominating set of $G$.
% \end{proof}

A {\em hypergraph} $\mathcal{H}$ is a pair $(V(\cH),\cE(\cH))$ where $V(\cH)$
is a finite set and $\cE(\cH) \subseteq 2^{V(\cH)}\setminus \{ \emptyset \}$. It is worth noticing that graphs
are special cases of hypergraphs. We will call the elements of $V(\cH)$
vertices and elements of $\cE(\cH)$ hyperedges, and when the context is clear a
hypergraph will be denoted by its set of hyperedges only.  If $\cH$ is
a hypergraph, we let $\cI(\cH)$, the \emph{bipartite incidence} graph of $\cH$,
be the graph with vertex set $V(\cH)\cup \{y_e\mid e\in \cE(\cH)\}$ and
edge set $\{xy_e\mid x\in V(\cH),\ e\in \cE(\cH)$ and $x\in e\}$.  Note that
the neighbourhood of the vertex $y_e$ in $\cI(\cH)$ is exactly the set $e$. A
hypergraph $\cH$ is said to be \emph{simple} if
\begin{enumerate}
\item[(i)] for all $e,e'\in \cE(\cH)$,  $e\subseteq e' \Longrightarrow e=e'$,
	and 
\item[(ii)]  $V(\cH)=\bigcup\limits_{e\in \cE(\cH)} e$.
\end{enumerate}

For a hypergraph $\cH$ we denote by $Min(\cH)$ the hypergraph on the
	same vertex set and keeping only minimal hyperedges, \ie, $\cE(Min(\cH)):=\{
		e\in \cE(\cH) \mid \forall e'\in \cE(\cH)\setminus \{e\},~e'\not\subseteq e  \}$. A {\em transversal} (or {\em hitting set}) of
$\mathcal{H}$ is a subset of $V(\cH)$ that has a non-empty
	intersection with every hyperedge of $\cE(\cH)$; it is {\em minimal} if it
does not contain any other transversal as a subset. The set of all minimal
transversals of $\mathcal{H}$ is denoted by $tr(\mathcal{H})$.  The size of a
hypergraph $\cH$, denoted by $\size{\cH}$, is $|V(\cH)| +
\sum\limits_{e\in\cE(\cH)}|e|$. The set of all hypergraphs (respectively
	all graphs) is denoted by $\sH$ (respectively $\sG$). 

\begin{prop}[\cite{Berge1989}] \label{prop:2.1}
 For each simple hypergraph $\cH$, we have $tr(tr(\cH))=\cH$.
\end{prop}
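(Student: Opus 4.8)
The plan is to exploit the self-dual complementation duality between minimal transversals and ground-set complements of maximal hyperedge-free sets. I would first record the elementary fact that, for a hypergraph $\cG$ and $U\subseteq V(\cG)$, $U$ is a transversal of $\cG$ if and only if no hyperedge of $\cG$ is contained in $V(\cG)\setminus U$; consequently $T\in tr(\cG)$ if and only if $V(\cG)\setminus T$ is inclusion-wise maximal among the subsets of $V(\cG)$ that contain no hyperedge of $\cG$. I would also use that, in a finite hypergraph, a set contains some minimal transversal if and only if it is itself a transversal, and that $tr(\cH)$ is again simple with ground set $V(\cH)$: its hyperedges are pairwise incomparable by definition, and every vertex $v$ lies in some minimal transversal --- if $v\in e_0\in\cE(\cH)$, take a minimal transversal $T_1\subseteq V(\cH)\setminus e_0$ of the family of traces $\{\,f\setminus e_0 : f\in\cE(\cH),\ f\neq e_0\,\}$ (each non-empty by simplicity), and shrink $T_1\cup\{v\}$ to a minimal transversal of $\cH$; it still contains $v$, since $v$ is its only element of $e_0$.

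The heart of the proof is then to apply the duality twice. A set $S$ lies in $tr(tr(\cH))$ iff $V(\cH)\setminus S$ is maximal among subsets of $V(\cH)$ containing no hyperedge of $tr(\cH)$, i.e.\ no minimal transversal of $\cH$, i.e.\ (by the fact above) not a transversal of $\cH$, i.e.\ disjoint from some hyperedge of $\cH$. Writing $U=V(\cH)\setminus S$ and spelling out maximality, this is equivalent to: some hyperedge of $\cH$ is contained in $S$, while for every $x\in S$ no hyperedge of $\cH$ is contained in $S\setminus\{x\}$.

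Finally I would show these two conditions characterise $\cE(\cH)$. If $e\in\cE(\cH)$, $e\subseteq S$ and $e\neq S$, pick $x\in S\setminus e$; then $e\subseteq S\setminus\{x\}$, contradicting the second condition, so $S=e\in\cE(\cH)$. Conversely, if $S\in\cE(\cH)$ the first condition holds with the hyperedge $S$, and simplicity (condition (i)) forbids any hyperedge properly contained in $S$, so the second holds as well; hence $S\in tr(tr(\cH))$. Since both hypergraphs have vertex set $V(\cH)$ --- by simplicity condition (ii) for $\cH$ together with the observation that $tr(\cH)$ is simple --- we conclude $tr(tr(\cH))=\cH$. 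I expect the only step that is not a pure unwinding of the definitions of transversal, minimality and simplicity to be the claim that every vertex of a simple hypergraph occurs in a minimal transversal (needed to pin down the ground set of $tr(\cH)$ and to allow the identity to be iterated); the rest is mechanical, if slightly fiddly, bookkeeping through one layer of complementation.
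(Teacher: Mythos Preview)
The paper does not give its own proof of this proposition; it is simply attributed to Berge~\cite{Berge1989} and used as a black box. Your argument is correct and is essentially the standard proof via the duality between minimal transversals and complements of maximal independent sets: you correctly translate $S\in tr(tr(\cH))$ into the pair of conditions ``some hyperedge is contained in $S$'' and ``no hyperedge is contained in $S\setminus\{x\}$ for any $x\in S$'', and then use simplicity (the Sperner property) to identify this with $S\in\cE(\cH)$. The auxiliary claim that every vertex of a simple hypergraph lies in some minimal transversal is proved cleanly, and it is worth noting that the paper itself records this very fact as Corollary~\ref{cor:2.1}, derived \emph{from} Proposition~\ref{prop:2.1} rather than as an ingredient of its proof --- so your route is self-contained where the paper simply cites the literature.
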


From Proposition \ref{prop:2.1}, we obtain the following.

\begin{cor}\label{cor:2.1}
 For each simple hypergraph  $\cH$ and each $x\in V(\cH)$, there
 exists $T\in tr(\cH)$ such that $x\in T$. 
\end{cor}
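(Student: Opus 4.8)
The plan is to derive Corollary~\ref{cor:2.1} directly from Proposition~\ref{prop:2.1} by contradiction. Suppose, for some simple hypergraph $\cH$ and some vertex $x\in V(\cH)$, that no minimal transversal of $\cH$ contains $x$; equivalently, $x$ belongs to no element of $tr(\cH)$. I would like to conclude that $x$ does not belong to any hyperedge of $tr(tr(\cH))$, which by Proposition~\ref{prop:2.1} equals $\cH$, contradicting condition~(ii) of simplicity (namely $V(\cH)=\bigcup_{e\in\cE(\cH)}e$), which guarantees that $x$ lies in some hyperedge of $\cH$.

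To make this precise I would first check that $tr(\cH)$ is itself a simple hypergraph on the vertex set $V(\cH)$, so that Proposition~\ref{prop:2.1} applies to it: minimal transversals are pairwise incomparable by definition, giving condition~(i), and every vertex of a simple hypergraph appears in some minimal transversal is exactly what we are trying to prove — so instead I should be slightly careful here. The cleanest route is: the hyperedges of $tr(tr(\cH))$ are by definition subsets of $V(\cH)$ each of which is a minimal transversal of the hypergraph $tr(\cH)$; in particular every element of $tr(tr(\cH))$ is a transversal of $tr(\cH)$, hence meets every member of $tr(\cH)$. Now by Proposition~\ref{prop:2.1}, $tr(tr(\cH))=\cH$, so every hyperedge $e\in\cE(\cH)$ is a transversal of $tr(\cH)$, i.e. $e$ meets every minimal transversal of $\cH$. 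Here is where the assumption enters: pick, using condition~(ii), a hyperedge $e\in\cE(\cH)$ with $x\in e$; since $x$ is in no minimal transversal, the set $e\setminus\{x\}$ still meets every member of $tr(\cH)$ (every minimal transversal that $e$ hits, it hits at a vertex other than $x$). Thus $e\setminus\{x\}$ is a transversal of $tr(\cH)$.

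But then I want to turn "$e\setminus\{x\}$ is a transversal of $tr(\cH)$" back into a statement about $\cH$. Since $tr(tr(tr(\cH)))=tr(\cH)$ (apply Proposition~\ref{prop:2.1} to the simple hypergraph $tr(\cH)$, or equivalently apply $tr$ to both sides of $tr(tr(\cH))=\cH$), the minimal transversals of $tr(\cH)$ are exactly the hyperedges of $\cH$ — more precisely, $Min$ of the transversals of $tr(\cH)$ equals $\cE(\cH)$. Hence any transversal of $tr(\cH)$ contains some hyperedge of $\cH$. So $e\setminus\{x\}$ contains some $e'\in\cE(\cH)$ with $e'\subseteq e\setminus\{x\}\subsetneq e$, contradicting condition~(i) of simplicity applied to $e'$ and $e$. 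This contradiction shows $x$ must lie in some minimal transversal, proving the corollary.

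The one genuine subtlety — and the step I expect to need the most care — is the bookkeeping around applying Proposition~\ref{prop:2.1} at the right level: the proposition is stated for simple hypergraphs, and I am implicitly using that $tr(\cH)$ is simple and that $tr(\cdot)$ of a simple hypergraph produces only incomparable hyperedges covering $V(\cH)$. I would verify simplicity of $tr(\cH)$ explicitly (incomparability is immediate from minimality of transversals; the covering condition $V(\cH)=\bigcup_{T\in tr(\cH)}T$ is, pleasantly, precisely the statement of this corollary, so I must avoid circularity by instead phrasing everything in terms of $tr(tr(\cH))=\cH$ and the definition of transversal, as above, rather than invoking simplicity of $tr(\cH)$ as a hypothesis of Proposition~\ref{prop:2.1}). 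An alternative, shorter packaging: from $tr(tr(\cH))=\cH$, every $e\in\cE(\cH)$ is a minimal transversal of $tr(\cH)$; if $x$ were in no $T\in tr(\cH)$, then no minimal transversal of $tr(\cH)$ could contain $x$, yet $\cH$ is simple so $V(\cH)=\bigcup_{e\in\cE(\cH)}e\ni x$, forcing some $e\in\cE(\cH)=tr(tr(\cH))$ to contain $x$ — a contradiction. I would present this compact version in the paper.
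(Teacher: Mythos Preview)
Your proposal is correct, and it takes exactly the approach the paper intends: the paper gives no proof beyond ``From Proposition~\ref{prop:2.1}, we obtain the following,'' and your compact version at the end is precisely the intended one-line derivation (if $x$ lies in no hyperedge of $tr(\cH)$ then no \emph{minimal} transversal of $tr(\cH)$ can contain $x$, yet $tr(tr(\cH))=\cH$ and simplicity~(ii) force some $e\in\cE(\cH)$ to contain $x$). Your care in avoiding the circularity about simplicity of $tr(\cH)$ is well placed; the compact version sidesteps it cleanly and is all you need to present.
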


%% $\cI'(\cH)$, the \emph{split incidence} graph of $\cH$, is
%% the graph obtained from $\cI(\cH)$ by replacing $\cI(\cH)[V(\cH)]$ by a
%% clique on $V(\cH)$.See Figure \ref{fig:2.1} for an example of
%% $\cI(\cH)$. 

% \begin{figure}[h!] 
% \centering
% \includegraphics[scale=1.6]{figure_I_H.pdf}
% \caption{The graph $\cI(\cH)$ of the hypergraph
% $\mathcal{H}=(\{x_1,x_2,x_3,x_4\},\{e_1,e_2,e_3,e_4,e_5,e_6\})$ where
% $e_1=\{x_1,x_2\}$, $e_2=\{x_1,x_2,x_3\}$, $e_3=\{x_1,x_3,x_4\}$,
% $e_4=\{x_2,x_4\}$, $e_5=\{x_3,x_4\}$, $e_6=\{x_2,x_4\}$.}
% \label{fig:2.1}
% \end{figure}

An \emph{enumeration algorithm} (algorithm for short) for a set $\sC$ is
an algorithm that lists the elements of $\sC$ without repetitions. Let
$\varphi(X)$ be a hypergraph property where $X$ is a subset of vertices
(for instance $\varphi(X)$ could be ``$X$ is a transversal''). For
a hypergraph $\cH$, we let $\sC_\varphi(\cH)$ be the set $\{Z\subseteq
	V(\cH)\mid \varphi(Z)$ is true in $\cH\}$.  An \emph{enumeration problem}
for the hypergraph property $\varphi(X)$ takes as input a hypergraph
$\cH$, and the task is to enumerate, without repetitions, the set
$\sC_\varphi(\cH)$. An algorithm for $\sC_\varphi(\cH)$ is an
\emph{output-polynomial time} algorithm if there exists a polynomial
$p:\bN\to \bN$ such that $\sC_\varphi(\cH)$ is listed in time $p(||\cH||
+ ||\sC_\varphi(\cH)||)$. Notice that since an algorithm $\cA$ for an
enumeration problem takes a hypergraph as input and outputs
a hypergraph with same vertex set, we can consider it as
a function $\cA:\sH\to \sH$.
Let $f:\bN\to \bN$. We say that an algorithm enumerates $\sC_\varphi(\cH)$ with
delay $f(||\cH||)$ if, after a polynomial time pre-processing, it
outputs the elements of $\sC_\varphi(\cH)$  without repetitions, the
delay between two outputs being bounded by $f(||\cH||)$. If $f$ is a
polynomial (or a linear function), we call it a polynomial (or linear)
delay algorithm.

\begin{defn}
	Let $P$ and $P'$ be enumeration problems for hypergraph
		properties $\varphi(X)$ and $\varphi'(X)$ respectively.  We say
		that $P'$ is \emph{at least as hard} as $P$, denoted by
		$P \preceq_{op} P'$, if an output-polynomial time algorithm for
		$P'$ implies an output-polynomial time algorithm for $P$.
\end{defn}

Two enumeration problems $P$ and $P'$ are \emph{equivalent} if
$P\preceq_{op} P'$ and $P'\preceq_{op} P$.  We denote by
\transhyp\ the enumeration problem of minimal transversals in
hypergraphs. Similarly, we denote by \dom\ the enumeration problem of
minimal dominating sets in graphs. For a problem $P$ and a subclass
$\cC$ of instances of $P$, we denote by $P(\cC)$ the problem $P$
restricted to the instances in $\cC$.  For instance, \dom(split
graphs) denotes the problem of enumerating the set of minimal
dominating sets in split graphs.

%A particular case of this reduction is the \emph{one-to-one} 
%reduction. We say that there is a \emph{one-to-one} from
%$\mathscr{C}$ to $\mathscr{C}'$ if $\mathscr{C}\preceq_{op} 
%\mathscr{C}'$ with $\mathscr{C}(\cH) = \mathscr{C}'(f(\cH))$ 

\section{\dom\ is Equivalent to \transhyp\ }\label{sec:3} 

The fact that \dom\ $\preceq_{op}$ \transhyp\ can be considered
	folklore. Let us remind it for completeness. For a graph $G$, we let
$\cN(G)$, the \emph{closed neighbourhood hypergraph}, be $(V(G),\{N_G[x]\mid
	x\in V(G)\})$.

\begin{lem}[Folklore \cite{BLS99}] \label{lem:3.1} Let $G$ be a graph and
	$D\subseteq V(G)$. Then $D$ is a dominating set of $G$ if and only if $D$ 
	is a transversal of $\cN(G)$ if and only if $D$ is a transversal
		of $Min(\cN(G))$.
\end{lem}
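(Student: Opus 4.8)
The plan is to observe that all three conditions are merely restatements of one and the same combinatorial fact, so the proof is a short unwinding of definitions together with one elementary observation about minimal hyperedges.

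First I would prove the equivalence ``$D$ is a dominating set of $G$ $\iff$ $D$ is a transversal of $\cN(G)$''. By definition $D$ is a dominating set iff for every vertex $v\in V(G)$ we have $v\in D$ or $v$ has a neighbour in $D$; since $N_G[v]=N_G(v)\cup\{v\}$, this says exactly that $D\cap N_G[v]\neq\emptyset$. On the other side, the hyperedges of $\cN(G)$ are precisely the closed neighbourhoods $N_G[x]$, $x\in V(G)$, so $D$ is a transversal of $\cN(G)$ iff $D\cap N_G[x]\neq\emptyset$ for every $x\in V(G)$. These two statements are literally identical (after renaming $v$ as $x$), which settles the first equivalence.

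Next I would handle ``$D$ transversal of $\cN(G)$ $\iff$ $D$ transversal of $Min(\cN(G))$''. The forward direction is trivial: $\cE(Min(\cN(G)))\subseteq \cE(\cN(G))$ and both hypergraphs have vertex set $V(G)$, so a set meeting every hyperedge of $\cN(G)$ in particular meets every hyperedge of $Min(\cN(G))$. For the converse, suppose $D\cap e'\neq\emptyset$ for every $e'\in\cE(Min(\cN(G)))$, and let $e=N_G[x]$ be an arbitrary hyperedge of $\cN(G)$. Since $\cN(G)$ is finite, among the hyperedges of $\cN(G)$ contained in $e$ there is an inclusion-minimal one, call it $e'$; by definition $e'\in\cE(Min(\cN(G)))$, hence $D\cap e'\neq\emptyset$, and since $e'\subseteq e$ we get $D\cap e\neq\emptyset$. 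As $e$ was arbitrary, $D$ is a transversal of $\cN(G)$.

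Because both halves reduce to rewriting the definitions, there is no real obstacle here; the only point needing (minimal) care is the converse of the second equivalence, where one must invoke that in a finite hypergraph every hyperedge contains an inclusion-minimal hyperedge. This is immediate but worth stating explicitly, since $Min(\cN(G))$ may genuinely contain strictly fewer hyperedges than $\cN(G)$ (for example when the closed neighbourhood of some vertex properly contains that of another).
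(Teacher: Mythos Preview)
Your proof is correct. Note that the paper does not actually prove this lemma: it is stated as folklore with a citation to \cite{BLS99}, so there is no ``paper's own proof'' to compare against. Your argument is exactly the standard one --- unwinding the definitions for the first equivalence, and using that every hyperedge of a finite hypergraph contains an inclusion-minimal hyperedge for the second --- and nothing more is needed.
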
 
\begin{cor}\label{cor:leq} 
	\dom\ $\preceq_{op}$ \transhyp.
\end{cor}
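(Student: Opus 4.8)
The plan is to use Lemma~\ref{lem:3.1} to transport the enumeration of minimal dominating sets of a graph into the enumeration of minimal transversals of an associated hypergraph, and then to check that nothing in the transport destroys output-polynomiality. Given a graph $G$ as input to \dom, I would first build its closed neighbourhood hypergraph $\cN(G)=(V(G),\{N_G[x]\mid x\in V(G)\})$ and then pass to $\cH:=Min(\cN(G))$. Both steps run in time polynomial in $\size{G}$: the hypergraph $\cN(G)$ has exactly $|V(G)|$ hyperedges, its size $\size{\cN(G)}=|V(G)|+\sum_{x\in V(G)}|N_G[x]|$ is at most quadratic in $|V(G)|$, and computing $Min$ only requires comparing hyperedges pairwise.

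The key step is to upgrade the correspondence of Lemma~\ref{lem:3.1} from arbitrary subsets to inclusion-minimal ones. By that lemma, a set $D\subseteq V(G)$ is a dominating set of $G$ if and only if it is a transversal of $\cH=Min(\cN(G))$; thus the family of dominating sets of $G$ and the family of transversals of $\cH$ are one and the same family of subsets of $V(G)$, and therefore their inclusion-minimal members coincide. Hence $\mathcal{D}(G)=tr(\cH)$ as families of vertex subsets, and in particular the desired output of \dom\ on $G$ is literally the output of \transhyp\ on $\cH$, so $\size{\mathcal{D}(G)}=\size{tr(\cH)}$.

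Finally I would assemble the reduction. Assuming an output-polynomial time algorithm $\cA$ for \transhyp, running in time $p(\size{\cH}+\size{tr(\cH)})$ for some polynomial $p$, I run $\cA$ on the hypergraph $\cH=Min(\cN(G))$ constructed above; its output is exactly $\mathcal{D}(G)$, listed without repetitions. The total running time is the polynomial construction time plus $p(\size{\cH}+\size{tr(\cH)})$, and since $\size{\cH}$ is polynomial in $\size{G}$ and $\size{tr(\cH)}=\size{\mathcal{D}(G)}$, this is bounded by a polynomial in $\size{G}+\size{\mathcal{D}(G)}$, giving an output-polynomial time algorithm for \dom. I do not expect a genuine obstacle here, since this is the easy direction and its entire content is Lemma~\ref{lem:3.1}; the only points needing care are the two routine verifications that passing to $Min(\cN(G))$ is polynomial and leaves the transversals unchanged, and that the constructed instance has size polynomial in $\size{G}$ so that output-polynomiality of $\cA$ is inherited by \dom.
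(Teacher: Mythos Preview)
Your proposal is correct and follows essentially the same route as the paper: use Lemma~\ref{lem:3.1} to identify $\mathcal{D}(G)$ with the minimal transversals of the (polynomially-sized) closed neighbourhood hypergraph, then invoke an assumed output-polynomial algorithm for \transhyp. The only cosmetic difference is that the paper feeds $\cN(G)$ itself to \transhyp\ (since $tr(\cN(G))=\mathcal{D}(G)$ already holds), whereas you additionally pass to $Min(\cN(G))$; this extra step is harmless but unnecessary.
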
 
\begin{proof} 
	From Lemma \ref{lem:3.1}, we have that $tr(\cN(G))=\cD(G)$. Hence, if we
	have an output-polynomial time algorithm for \transhyp\, then we
	can use it to enumerate all minimal dominating sets of a graph in
	output-polynomial time.
\end{proof}

\begin{cor}
	\label{lemx}Let $G$ be a graph and $x\in V(G)$. Then there
  exists $D\in \cD(G)$ such that $x\in D$.
\end{cor}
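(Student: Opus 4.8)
The plan is to deduce the statement from Corollary~\ref{cor:2.1}, which is the exact analogue for simple hypergraphs. First I would use Lemma~\ref{lem:3.1} to record that the minimal dominating sets of $G$ are precisely the minimal transversals of the hypergraph $Min(\cN(G))$, that is, $\cD(G)=tr(Min(\cN(G)))$. Since $V(Min(\cN(G)))=V(G)$, the vertex $x$ is a vertex of this hypergraph, so it suffices to check that $Min(\cN(G))$ is \emph{simple}: Corollary~\ref{cor:2.1} would then immediately provide a minimal transversal $T$ of $Min(\cN(G))$ with $x\in T$, and this $T$ is the required element of $\cD(G)$.

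So the work reduces to verifying conditions (i) and (ii) in the definition of a simple hypergraph for $Min(\cN(G))$. Condition (i) (no hyperedge strictly contains another) holds by construction of $Min$. For condition (ii) I would argue that every vertex $x$ lies in some minimal hyperedge: among all closed neighbourhoods contained in $N_G[x]$ pick one, say $N_G[z]$, that is inclusion-minimal; it is then inclusion-minimal in all of $\cN(G)$ (any closed neighbourhood strictly inside it would also lie inside $N_G[x]$, contradicting the choice of $z$), hence a hyperedge of $Min(\cN(G))$, and since $z\in N_G[z]\subseteq N_G[x]$ we have $z=x$ or $zx\in E(G)$, so $x\in N_G[z]$ in either case. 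Thus $V(G)=\bigcup_{e\in\cE(Min(\cN(G)))} e$.

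The only genuinely non-formal point is this last observation (condition (ii)), and it is short; everything else is bookkeeping around Lemma~\ref{lem:3.1} and Corollary~\ref{cor:2.1}. As an alternative one could work entirely inside $G$: $V(G)$ is a dominating set, so one may greedily delete vertices until reaching a minimal dominating set, but guaranteeing that $x$ survives every deletion needs exactly the remark that $x$ belongs to an inclusion-minimal closed neighbourhood $N_G[z]$ (keeping $z$ available as a private neighbour of $x$). Hence the hypergraph route above is the most economical, and I expect no real obstacle beyond phrasing condition (ii) cleanly.
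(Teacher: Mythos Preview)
Your proposal is correct and follows exactly the route the paper takes: the paper's proof is the one-line ``Corollary of Lemma~\ref{lem:3.1} and Corollary~\ref{cor:2.1}'', and you have simply spelled out the one hypothesis the paper leaves implicit, namely that $Min(\cN(G))$ is simple so that Corollary~\ref{cor:2.1} applies. Your verification of condition~(ii) via an inclusion-minimal $N_G[z]\subseteq N_G[x]$ is precisely the content of the paper's later Remark~\ref{rem:1}.
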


\begin{proof} 
    Corollary of Lemma \ref{lem:3.1} and Corollary \ref{cor:2.1}.
    \end{proof}

We now prove that \transhyp\ $\preceq_{op}$ \dom. One may
	wonder whether with every hypergraph $\cH$ one can associate a graph
$G$ such that $\cD(G)=tr(\cH)$. However, the following result shows
	that such a reduction does not exist.  
%% the following statement claims that
%% there is no hope to find a one-to-one reduction from \transhyp to
%% \dom\.

\begin{prop}\label{one}
 For every function $f:\sH\to \sG$, there exists $\cH \in \mathscr{H}$
 such that $tr(\cH)\neq\cD(f(\cH))$. 
\end{prop}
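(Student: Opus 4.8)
The plan is to exhibit a single small hypergraph $\cH$ whose minimal transversal set $tr(\cH)$ cannot be realized as $\cD(G)$ for \emph{any} graph $G$, regardless of how $f$ chooses $G$. The key observation is that the function $f$ is fixed but arbitrary, so it suffices to find one $\cH$ for which \emph{no} graph $G$ satisfies $\cD(G) = tr(\cH)$; then in particular $\cD(f(\cH)) \neq tr(\cH)$. So the real content is a structural obstruction: there is a family of set systems that arise as $tr(\cH)$ but never as $\cD(G)$.

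First I would compute a concrete example. A natural candidate is to take $\cH$ so that $tr(\cH)$ is a ``small'' family over a small ground set — for instance, aim for $tr(\cH)$ to consist of a few singletons, or of sets that force a contradiction when one tries to read them as minimal dominating sets. The crucial fact to exploit is Corollary \ref{lemx}: in \emph{any} graph $G$, every vertex $x \in V(G)$ belongs to some minimal dominating set. Hence if $D_0 \subseteq V(G)$ is itself a minimal dominating set, every vertex of $D_0$ lies in $D_0$ and possibly in others; more importantly, $V(G) = \bigcup_{D \in \cD(G)} D$ can never happen to \emph{exclude} a vertex — every vertex of $G$ appears in the union of all minimal dominating sets. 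Dually, one can look at what sets of size $1$ can appear: $\{x\} \in \cD(G)$ iff $N_G[x] = V(G)$, i.e. $x$ is universal. So I would design $\cH$ with $tr(\cH)$ containing, say, two disjoint singletons $\{a\}$ and $\{b\}$ together with the whole ground set or some larger set $\{a,b,c\}$. If $\{a\} \in \cD(G)$ and $\{b\} \in \cD(G)$ then both $a$ and $b$ are universal vertices of $G$; but then $\{a,b\}$ strictly contains the dominating set $\{a\}$, so $\{a,b\}$ is \emph{not} minimal, and more generally any set containing $a$ other than $\{a\}$ fails to be a minimal dominating set. Concretely, take $\cH$ with $V(\cH) = \{a,b,c\}$ and hyperedges chosen so that $tr(\cH) = \{\{a\}, \{b\}, \{c\}\}$: for example $\cH = \{\{a,b\},\{b,c\},\{a,c\}\}$ has minimal transversals exactly the pairs $\{a,b\},\{a,c\},\{b,c\}$ — not quite; I would instead want $tr(\cH)$ to be $\{\{a,b\},\{b,c\}\}$ or similar and check by hand which small families are \emph{non-realizable}.

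The cleanest route: pick $\cH$ so that $tr(\cH) = \big\{\{a\},\ \{b,c\}\big\}$ — this is $tr$ of the hypergraph with hyperedges $\{a,b\}$ and $\{a,c\}$ (a minimal transversal must hit both; hitting via $a$ gives $\{a\}$, otherwise one needs $b$ and $c$). Now suppose some graph $G$ has $\cD(G) = \{\{a\},\{b,c\}\}$. From $\{a\} \in \cD(G)$, $a$ is universal, so $a \in N_G[v]$ for every $v$; hence $\{a\}$ dominates $b$ and $c$, which means $\{b,c\}$ strictly contains no... wait — rather, since $a$ is universal, the set $\{b,c\}$ being minimal forces each of $b,c$ to have a private neighbour; but $a$ is dominated by $b$ or $c$ and also by itself is impossible since $a\notin\{b,c\}$... and crucially $V(G)$ must be exactly $\{a,b,c\}$ (every vertex lies in some minimal dominating set by Corollary \ref{lemx}, and the union of $\{a\}$ and $\{b,c\}$ is $\{a,b,c\}$). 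With $V(G)=\{a,b,c\}$ and $a$ universal, $G$ is one of finitely many graphs on $3$ vertices with a universal vertex; I would simply enumerate these (there are only three: $a$ joined to an edge $bc$, to a non-edge, i.e. $P_3$, or the triangle) and check in each case that $\cD(G) \neq \{\{a\},\{b,c\}\}$ — e.g. for the triangle $\cD(G) = \{\{a\},\{b\},\{c\}\}$, for $P_3$ centred at $a$ we get $\{a\}$ and $\{b,c\}$... so I may need to adjust the target family, perhaps to $\{\{a\},\{b\},\{c,d\}\}$ over four elements, where universality of both $a$ and $b$ already makes $\{c,d\}$ non-minimal since $a$ alone dominates everything.

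The main obstacle is choosing the target family $tr(\cH)$ so that the finite case-check is genuinely airtight: I must guarantee both that (1) the chosen family really is $tr(\cH)$ for an honest simple hypergraph $\cH$ (easy, by constructing $\cH$ explicitly and invoking Proposition \ref{prop:2.1}, $tr(tr(\cH)) = \cH$), and (2) that \emph{no} graph realizes it as $\cD(G)$. For (2) the leverage is always the same pair of facts — a singleton $\{x\} \in \cD(G)$ forces $x$ universal, and by Corollary \ref{lemx} the union of all members of $\cD(G)$ is all of $V(G)$, pinning down $V(G)$ — after which only finitely many graphs remain and each is dispatched by direct computation of $\cD$. I expect the write-up to fix one explicit $\cH$ (likely on $3$ or $4$ vertices), state $tr(\cH)$, and then argue: if $\cD(G) = tr(\cH)$ then $V(G)$ is determined, the presence of singleton(s) forces universal vertices, and a one-paragraph case analysis of the remaining graphs yields a contradiction — hence $\cD(f(\cH)) \neq tr(\cH)$ for the given $f$, completing the proof.
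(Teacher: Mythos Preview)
Your overall strategy---exhibit one hypergraph $\cH$ whose transversal family $tr(\cH)$ is not $\cD(G)$ for \emph{any} graph $G$---is legitimate and is in fact how the paper proceeds, though by a different mechanism. But your execution has a genuine gap.

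First, a conceptual slip. You write that if $a$ and $b$ are universal then ``$\{c,d\}$ is non-minimal since $a$ alone dominates everything.'' That is not what minimality means: a dominating set $D$ is minimal when no proper subset of $D$ dominates, not when no smaller set anywhere dominates. And indeed your four-vertex target $\{\{a\},\{b\},\{c,d\}\}$ \emph{is} realised: take $V(G)=\{a,b,c,d\}$ with $a,b$ universal and $cd\notin E(G)$; then $\{c\}$ misses $d$, $\{d\}$ misses $c$, so $\{c,d\}$ is a minimal dominating set, and one checks $\cD(G)=\{\{a\},\{b\},\{c,d\}\}$ exactly. Together with your own observation that $P_3$ realises $\{\{a\},\{b,c\}\}$, this means neither of your two concrete candidates works, and the proof as written does not close.

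Ironically, the example you discarded does work. With $\cH=\{\{a,b\},\{a,c\},\{b,c\}\}$ you correctly computed $tr(\cH)=\{\{a,b\},\{a,c\},\{b,c\}\}$. If some $G$ had this as $\cD(G)$, then by Corollary~\ref{lemx} we have $V(G)=\{a,b,c\}$, and since no singleton lies in $\cD(G)$ there is no universal vertex; on three vertices that leaves only the empty graph (with $\cD=\{\{a,b,c\}\}$) or a single edge, say $ab$ (with $\cD=\{\{a,c\},\{b,c\}\}$), neither of which matches. So the triangle hypergraph already suffices.

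For comparison, the paper does not hunt for an explicit small example. It instead invokes the existence (from \cite{BorosGZ08}) of a simple hypergraph $\cH$ with $|V(\cH)|=|\cE(\cH)|=n$ that is not the closed-neighbourhood hypergraph of any graph. If $\cD(G)=tr(\cH)$, then $V(G)=V(\cH)$ (by your Corollary~\ref{lemx}/Corollary~\ref{cor:2.1} argument), and since $\cD(G)=tr(Min(\cN(G)))$ one gets $\cH=Min(\cN(G))$ by Proposition~\ref{prop:2.1}; the cardinality constraint $|\cE(\cH)|=n$ then forces $Min(\cN(G))=\cN(G)$, so $\cH=\cN(G)$, a contradiction. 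Your route is more elementary and self-contained (no external citation needed) once a correct example is fixed; the paper's route explains \emph{why} such examples exist in general.
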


\begin{proof}
Let $\cH$ be a simple hypergraph with $|V(\cH)|=|\cE(\cH)|=n$ and such
that $\cH$ is not the closed neighbourhood hypergraph of any
graph. Such a hypergraph exists (see for instance
\cite{BorosGZ08}). Now assume that there exists a
graph $G$ such that $\cD(G)=tr(\cH)$. Note that since each vertex of a
simple hypergraph belongs to at least one minimal transversal
(Corollary \ref{cor:2.1}), and since each vertex of a graph appears in
at least one minimal dominating set (Corollary \ref{lemx}), we have
$V(G)=V(\cH)$.  By Lemma \ref{lem:3.1},
$tr(\cH)=tr(\cN(G))=tr(Min(\cN(G)))$ and so $\cH=Min(\cN(G))$
(Proposition \ref{prop:2.1}).  Furthermore, $Min(\cN(G))\subseteq
\cN(G)$ and $|Min(\cN(G))|=|\cE(\cH)|=n=|\cN(G)|$ and so
$Min(\cN(G))=\cN(G)$.  We conclude that $\cH=\cN(G)$ and then $\cH$ is
the closed neighbourhood hypergraph of $G$, which contradicts the assumption.
\end{proof}

Despite the above result, we can polynomially reduce \transhyp\ to \dom.
 In order to prove this statement we introduce the \emph{co-bipartite incidence}
 graph  associated with every hypergraph $\cH$.
 \begin{defn}\label{co-bip}
Let $\cH$ be a hypergraph. We associate with $\cH$ a co-bipartite incidence graph $\cB(\cH)$, defined as follows:
%With every hypergraph $\cH$ we define the associated co-bipartite graph $\cB(\cH)$ as follows:
\begin{itemize}
\item $V(\cB(\cH)):=V(\cI(\cH)) \cup \{v\}$ with $v\notin V(\cI(\cH))$,

\item $E(\cB(\cH)):= E(\cI(\cH)) \cup \{vx\mid x\in V(\cH)\} \cup \{xy\mid
	x,y\in V(\cH)\} \cup \{y_ey_{e'}\mid e,e'\in \cE(\cH)\}$.
\end{itemize} 
\end{defn}

In other words, $\cB(\cH)$ is obtained from $\cI(\cH)$
by adding a new vertex that is made adjacent to all vertices in
$V(\cH)$, and replacing the subgraph induced by $V(\cH)$ (resp.
$\{y_e\mid e\in \cE(\cH)\}$) by a clique on the same set;  see Figure
\ref{fig:3.1} for an illustration. The following is straightforward to
prove.

\begin{figure}[h!]%\label{fig} \centering
	\includegraphics[scale=1.3]{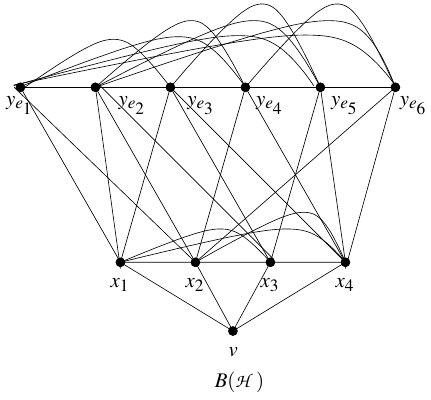} 
	\caption{An example of the
		co-bipartite incidence graph $\cB(\cH)$  of the hypergraph
		$\mathcal{H}=(\{x_1,x_2,x_3,x_4\},\{e_1,e_2,e_3,e_4,e_5,e_6\})$ where
		$e_1=\{x_1,x_2\}$, $e_2=\{x_1,x_2,x_3\}$, $e_3=\{x_1,x_3,x_4\}$,
		$e_4=\{x_2,x_4\}$, $e_5=\{x_3,x_4\}$, $e_6=\{x_2,x_4\}$. The set
		$\{x_1,x_2\}$ is a minimal transversal of $\cH$ and a minimal
		dominating set of $\cB(\cH)$. } \label{fig:3.1}
\end{figure}

\begin{lem}\label{lemme3}
Let $\cH$ be a hypergraph and $T$ a transversal of $\cH$. Then $T$ is a
dominating set of $\cB(\cH)$.
\end{lem}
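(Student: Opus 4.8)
The plan is to unwind the definitions and check directly that every vertex of $\cB(\cH)$ is either in $T$ or adjacent to a vertex of $T$. Recall from Definition~\ref{co-bip} that $V(\cB(\cH)) = V(\cH) \cup \{y_e \mid e\in \cE(\cH)\} \cup \{v\}$, so I would split the verification into these three groups of vertices.

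First, for the vertices in $V(\cH)$: if $x\in V(\cH)$ and $x\notin T$, then since $T$ is nonempty (a transversal is nonempty because hyperedges are nonempty, so there is some $x'\in T\cap V(\cH)$) and $V(\cH)$ induces a clique in $\cB(\cH)$, the vertex $x$ is adjacent to any element of $T\cap V(\cH)$; alternatively, $x$ is adjacent to $v$, but $v\notin T$ in general, so the clique argument is the one to use. Actually the cleanest uniform statement: every vertex of $V(\cH)$ not in $T$ has a neighbour in $T$ because $T\subseteq V(\cH)$ is nonempty and $V(\cH)$ is a clique. Second, for the new vertex $v$: by construction $v$ is adjacent to every vertex of $V(\cH)$, and $T$ is a nonempty subset of $V(\cH)$, so $v$ has a neighbour in $T$ (and if $v\in T$ we are trivially done). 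Third, for a vertex $y_e$ with $e\in \cE(\cH)$: since $T$ is a transversal of $\cH$, we have $T\cap e \neq \emptyset$; pick $x\in T\cap e$. The neighbourhood of $y_e$ in $\cI(\cH)$ is exactly $e$, hence $x y_e \in E(\cI(\cH)) \subseteq E(\cB(\cH))$, so $y_e$ has the neighbour $x\in T$.

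Having covered all three groups, every vertex of $\cB(\cH)$ lies in $N_{\cB(\cH)}[T]$, i.e. $T$ is a dominating set of $\cB(\cH)$. The only mild subtlety — hardly an obstacle — is making sure $T$ is nonempty so that the clique argument for vertices of $V(\cH)$ and the argument for $v$ go through; this follows because transversals must hit every hyperedge and hyperedges are by definition nonempty (if $\cE(\cH)=\emptyset$ the statement is vacuous or $\cB(\cH)$ has no $y_e$ vertices and one argues separately, but in the intended setting $\cH$ is simple and nonempty). Since the lemma only concerns a fixed transversal $T$, no minimality is needed here, which keeps the argument purely a neighbourhood check.
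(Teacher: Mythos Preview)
Your verification is correct and is precisely the ``straightforward'' check the paper alludes to; the paper does not actually spell out a proof of this lemma, so there is nothing to compare against beyond your own argument. The only edge case you flag (nonemptiness of $T$) is handled consistently with the paper's implicit assumption, used in the proof of Lemma~\ref{lemma2}, that $\cH$ has at least one (nonempty) hyperedge.
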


The following lemma claims that there is only a quadratic number of minimal
dominating sets of $\cB(\cH)$ that are not minimal transversals of $\cH$. 

\begin{lem}\label{lemma2}
Let $\cH$ be a hypergraph and let $D$ be a minimal dominating set of
$\cB(\cH)$. Then $D$ is either equal to $\{x,y_e\}$ with $x \in V(\cH)\cup
\{v\}$ and $e\in \cE(\cH)$ or $D$ is a minimal transversal of $\cH$.  
\end{lem}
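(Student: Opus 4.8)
The plan is to analyze a minimal dominating set $D$ of $\cB(\cH)$ by distinguishing cases according to how $D$ interacts with the two cliques $X := V(\cH)$ and $Y := \{y_e \mid e \in \cE(\cH)\}$ and the apex vertex $v$. First I would record the domination structure of $\cB(\cH)$: the vertex $v$ is adjacent exactly to $X$ (and to nothing in $Y$), each $x \in X$ is adjacent to $v$, to all of $X \setminus \{x\}$, and to those $y_e$ with $x \in e$; each $y_e$ is adjacent to all of $Y \setminus \{y_e\}$ and to those $x \in X$ with $x \in e$, but not to $v$. The key consequence is that to dominate $v$ one needs $D \cap (X \cup \{v\}) \neq \emptyset$, and to dominate each $y_e$ one needs either some $y_{e'} \in D$ or some $x \in e \cap D$.

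Next I would split into the case $D \cap Y \neq \emptyset$ and the case $D \cap Y = \emptyset$. In the first case, pick $y_e \in D$. Since $y_e$ dominates all of $Y$, no other element of $Y$ can be in $D$ (otherwise it would have no private neighbour — here I would invoke Lemma~\ref{lem:2.1}). So $D \cap Y = \{y_e\}$. Now $D$ must still dominate $v$, so $D$ contains some $z \in X \cup \{v\}$; I claim $D = \{y_e, z\}$. Indeed $\{y_e\}$ already dominates all of $Y$, and $z \in X \cup \{v\}$ dominates all of $X \cup \{v\}$ (since $X \cup \{v\}$ is a clique with $v$ adjacent to all of $X$), so $\{y_e, z\}$ is already dominating; minimality forces $D = \{y_e, z\}$, which is of the stated form $\{x, y_e\}$ with $x \in X \cup \{v\}$.

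In the second case $D \cap Y = \emptyset$, so $D \subseteq X \cup \{v\}$. Then domination of each $y_e$ forces $D \cap e \neq \emptyset$ for every $e$, \ie $D \setminus \{v\}$ is a transversal of $\cH$ (note $v \notin e$ for any $e$). I would next argue $v \notin D$: if $v \in D$, then since $D \setminus \{v\} \subseteq X$ is nonempty (it is a transversal, and hyperedges are nonempty) it already dominates all of $X \cup \{v\}$, and it also dominates all $y_e$ since it hits every $e$; so $v$ has no private neighbour, contradicting Lemma~\ref{lem:2.1}. Hence $D \subseteq X$ and $D$ is a transversal of $\cH$. Finally, minimality of $D$ as a dominating set must coincide with minimality of $D$ as a transversal: if $D' \subsetneq D$ were a transversal of $\cH$, then by Lemma~\ref{lemme3} $D'$ would be a dominating set of $\cB(\cH)$, contradicting minimality of $D$; conversely a minimal transversal that is not a minimal dominating set is impossible here since any dominating $D'' \subsetneq D$ with $D'' \subseteq X$ must (by the same argument as above, dominating all $y_e$) be a transversal. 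So $D \in tr(\cH)$, completing the proof.

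The routine parts are the adjacency bookkeeping and the repeated applications of Lemma~\ref{lem:2.1}; the only place needing a little care is ruling out $v \in D$ in the second case and making sure the equivalence ``minimal dominating inside $X$'' $\Leftrightarrow$ ``minimal transversal'' is airtight, which is where Lemma~\ref{lemme3} is used. I do not expect a serious obstacle — the structure of $\cB(\cH)$ as two cliques plus an apex makes the case analysis short.
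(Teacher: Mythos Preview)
Your proposal is correct and follows essentially the same approach as the paper: both proofs split on whether $D$ meets $Y=\{y_e\mid e\in\cE(\cH)\}$, use the clique structure of $X\cup\{v\}$ and of $Y$ together with Lemma~\ref{lem:2.1} to force $D=\{x,y_e\}$ in the first case, rule out $v\in D$ via private neighbours in the second case, and invoke Lemma~\ref{lemme3} for minimality of the transversal. One small ordering issue in your Case~1: the sentence ``since $y_e$ dominates all of $Y$, no other $y_{e'}$ can be in $D$'' is not quite complete on its own, because $y_{e'}$ could a priori have a private neighbour in $e'\subseteq X$; you need to have already fixed some $z\in D\cap(X\cup\{v\})$ (which you noted exists upfront) so that $X\subseteq N[z]$ kills those potential private neighbours --- just move that observation before the uniqueness claim.
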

\begin{proof}
As $v$ must be dominated by $D$, $D\cap (V(\cH)\cup\{v\} )\neq
\emptyset$.  Let $x\in D\cap (V(\cH)\cup\{v\} )$.  Assume that $D\cap
\{y_e \mid e\in \cE(\cH)\}\neq \emptyset$. Since $D$ is a minimal
dominating set, $x$ dominates $V(\cH) \cup \{ v \}$, and since $\{ y_e
    \mid e \in \cE(\cH) \}$ is a clique, $|D\cap \{ y_e \mid e \in
    \cE(\cH) \}|=1$. This implies that $D$ is of the form $\{ x,y_e \}$.
So assume that $D\subseteq V(\cH)\cup \{v\}$. It is easy to see that
$D\subseteq V(\cH)$, because if $v$ is in $D$, since $N_G[v]\cap \{y_e
    \mid e\in \cE(\cH)\}=\emptyset$ and $\cH$ contains at least one
non-empty hyperedge, $D$ must contain another vertex $x$ from $V(\cH)$.
But, since $N_G[v] \subseteq N_G[x]$, $P_D(v)=\emptyset$ which
contradicts the minimality of $D$ (cf. Lemma \ref{lem:2.1}).  We now
show that such a $D$ is a transversal of $\cH$.  Indeed since $D$ is
included in $V(\cH)$, every vertex $y_e$ with $e \in \cE(\cH)$ must be
incident with a vertex in $D\cap V(\cH)$ and then $D$ is a transversal
of $\cH$. Lemma \ref{lemme3} ensures that $D$ is a minimal transversal.
\end{proof}

\begin{thm}\label{theorem}
    \transhyp\ $\preceq_{op}$ \dom(co-bipartite graphs).%(\dom restricted to co-bipartite graphs). 
\end{thm}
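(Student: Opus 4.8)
The plan is to show that an output-polynomial time algorithm for \dom\ restricted to co-bipartite graphs yields one for \transhyp. Given a hypergraph $\cH$ as input, the reduction builds the co-bipartite incidence graph $\cB(\cH)$ of Definition \ref{co-bip}; note that $V(\cH)$ and $\{y_e\mid e\in\cE(\cH)\}$ are both cliques covering $V(\cB(\cH))\setminus$ nothing in fact $V(\cB(\cH))=V(\cH)\cup\{y_e\}\cup\{v\}$, and $\{v\}\cup V(\cH)$ is a clique as well, so the complement of $\cB(\cH)$ is bipartite with parts $(V(\cH)\cup\{v\})$ and $\{y_e\mid e\in\cE(\cH)\}$; hence $\cB(\cH)$ is co-bipartite and can be constructed in time polynomial in $\size{\cH}$. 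We then run the assumed algorithm on $\cB(\cH)$ to enumerate $\cD(\cB(\cH))$.

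The core of the argument is Lemma \ref{lemma2}: every $D\in\cD(\cB(\cH))$ is either one of the ``small'' sets $\{x,y_e\}$ with $x\in V(\cH)\cup\{v\}$ and $e\in\cE(\cH)$, of which there are at most $(|V(\cH)|+1)\cdot|\cE(\cH)|$, or else it is a minimal transversal of $\cH$. Conversely, by Corollary \ref{cor:2.1} (applied after passing to $Min(\cH)$, which has the same transversals) every minimal transversal $T$ of $\cH$ is a subset of $V(\cH)$ that hits every hyperedge; by Lemma \ref{lemme3} it dominates $\cB(\cH)$, and one checks it is in fact a \emph{minimal} dominating set — minimality transfers because removing any $t\in T$ leaves some $y_e$ (with $e$ hit only by $t$) undominated, since the other private-neighbour roles are already covered. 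Thus $tr(Min(\cH))=tr(\cH)$ appears inside $\cD(\cB(\cH))$, together with at most quadratically many extra ``junk'' sets. The enumeration procedure for \transhyp\ then simply runs the \dom\ algorithm on $\cB(\cH)$ and outputs each produced set that has size at least $3$ or fails to be of the form $\{x,y_e\}$; equivalently, output exactly those $D\subseteq V(\cH)$ that are transversals, which is checkable in polynomial time per set.

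For the complexity accounting: $\size{\cB(\cH)}$ is polynomial in $\size{\cH}$, and $||\cD(\cB(\cH))|| \le ||tr(\cH)|| + O(|V(\cH)|\cdot|\cE(\cH)|)\cdot 2 = ||tr(\cH)|| + \mathrm{poly}(\size{\cH})$, so if the \dom\ algorithm runs in time $p(\size{\cB(\cH)}+||\cD(\cB(\cH))||)$ for a polynomial $p$, the overall running time is bounded by a polynomial in $\size{\cH}+||tr(\cH)||$, with only a polynomial per-set overhead for the filtering test. This is exactly the output-polynomial bound required, establishing \transhyp\ $\preceq_{op}$ \dom(co-bipartite graphs).

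The main obstacle I anticipate is not the reduction itself but the bookkeeping around the ``junk'' solutions $\{x,y_e\}$: one must be sure both that there are only polynomially many of them (so they do not inflate the output size beyond a polynomial factor) and that they are recognizable in polynomial time so the filter is correct, and one must confirm that genuine minimal transversals of $\cH$ really do survive as minimal dominating sets of $\cB(\cH)$ rather than being absorbed into larger dominating sets — this is where Lemma \ref{lemme3} together with a private-neighbour argument via Lemma \ref{lem:2.1} does the work. A secondary subtlety is handling degenerate inputs (e.g.\ $\cH$ with an empty hyperedge, or with $V(\cH)\neq\bigcup_e e$): these are easily normalized by first replacing $\cH$ with $Min(\cH)$ restricted to its support, which has the same minimal transversals.
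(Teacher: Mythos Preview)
Your proposal is correct and follows essentially the same route as the paper: build $\cB(\cH)$, invoke the assumed \dom\ algorithm, use Lemma~\ref{lemma2} to bound the junk, and filter. If anything you are slightly more careful than the paper, which cites Lemma~\ref{lemma2} for the claim that all minimal transversals appear in $\cD(\cB(\cH))$ even though that lemma is stated only as an inclusion $\cD(\cB(\cH))\subseteq tr(\cH)\cup\{\text{junk}\}$; your private-neighbour argument via Lemmas~\ref{lemme3} and~\ref{lem:2.1} makes the missing converse explicit.
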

\begin{proof}
    Assume there exists an output-polynomial time algorithm $\mathcal{A}$
    for the \dom\ problem which, given a co-bipartite graph $G$, outputs
    $\cD(G)$ in time $p(||G||+|\cD(G)|)$ where $p$ is a
    polynomial. Given a hypergraph $\cH$, we construct the co-bipartite
    graph $\cB(\cH)$ and call $\mathcal{A}$ on $\cB(\cH)$. By Lemma
    \ref{lemma2}, $\mathcal{A}$ on $\cB(\cH)$ outputs all minimal
    transversals of $\cH$.  We now discuss the time complexity. We
    clearly have $||\cB(\cH)||=O(||\cH||)$ and $\cB(\cH)$ can be
    constructed in time $O(||\cH||)$. Moreover by Lemma \ref{lemma2},
    $||\cD(\cB(\cH))||\leq ||tr(\cH)||+|V(\cH)|\times
    |\cE(\cH)|$. Therefore, $\mathcal{A}$ on $\cB(\cH)$ runs in time
    $O(p(||\cH||+||tr(\cH)||) + |V(\cH)|\times |\cE(\cH)|)$, which is
    polynomial on $||\cH||+||tr(\cH)|| $.
%     Moreover, each minimal
%     dominating set of $\cB(\cH)$ which is not a minimal transversal of
%     $\cH$ is of the form $\{ x,y_e \}$. Hence, there is at most
%     $|V(\cH)\times \cE(\cH)|$ such sets.  Therefore, $\mathcal{A}$
%     outputs all minimal transversals of $\cH$ in
%     $p(||\cH||+|tr(\cH)|+|V(\cH)+\cE(\cH)|)$ whre $p$ is the polynomial
%     that bound the time complexity of $\mathcal{A}$.
   \end{proof}

%     From Lemma \ref{lem:3.1}, we know that \dom $\preceq_{op}$ \transhyp.  Let
% \textbf{us} show that \transhyp $\preceq_{op}$ \dom. Let $f:\sH\to \sG$ be such
% that for every hypergraph $\cH$, $f(\cH)=\cB(\cH)$, let us recall that
% $\cB(\cH)$ represents the co-bipartite graphs in Definition \ref{co-bip}. And
% let $g:\sH\to \sH$ be such that $g(\cD(\cB(\cH))) = tr(\cH)$. Note that from
% Lemma \ref{lemma2}, $g(\cD(\cB(\cH)))= \{ D \in \cD(\cB(\cH) ) \mid D \subseteq
% 	V(\cH)\}$ It is clear that $f$ \textbf{can be computed} in time
% $O(||\cB(\cH)||)$ and $||\cB(\cH)||\leq
% (|V(\cH)|+1)^2+|\cE(\cH)|^2+|V(\cH)|\cdot |\cE(\cH)|$. From Lemma \ref{lemma2},
% $|\cD(\cB(\cH))| \leq |tr(\cH)| + |V(\cH)|\cdot |\cE(\cH)|$ and also $g$
% \textbf{can be computed} in time $O(|V(\cH)|\cdot |\cE(\cH)| + |tr(\cH)|)$.
%%  Assume that there exists an algorithm which solves \dom and, for all
%% graphs $G$, runs in $p(||G||+|\cD(G)|)$, where $p$ is a polynomial. Let $\cH$ be a
%% hypergraph. From Lemma
%% \ref{lemma2}, $tr(\cH)\subseteq \cD(\cB(\cH))$, and then one could solve
%% \transhyp in time $p(||\cB(\cH)||+|\cD(\cB(\cH))|)$. Furthermore
%% $|\cD(\cB(\cH))|\leq |tr(\cH)|+|V(\cH)|\cdot |\cE(\cH)|$ and so one could
%% solve \transhyp in $p(||\cH||+|tr(\cH)|+|V(\cH)|\cdot |\cE(\cH)|)$.

%% One can remark that this reduction is not a one-to-one
%% reduction. Indeed, there is no bijection between $\cD(\cB(\cH))$ %
% and $tr(\cH)$, thus it does not contradict the Proposition %% \ref{one}.
Corollary \ref{cor:leq}  and Theorem \ref{theorem} together imply the following result.
\begin{cor}\label{cor:equivalence}
    \dom(co-bipartite graphs), \dom\ and \transhyp\ are all equivalent.
\end{cor}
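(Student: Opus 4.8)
The plan is to observe that \Cref{cor:equivalence} is really just a chaining of the two reductions we have already established, so the proof amounts to assembling the directional implications into a cycle. First I would recall that \Cref{cor:leq} gives \dom\ $\preceq_{op}$ \transhyp, and that \Cref{theorem} gives \transhyp\ $\preceq_{op}$ \dom(co-bipartite graphs). The only missing link is the trivial observation that \dom(co-bipartite graphs) $\preceq_{op}$ \dom, which holds because co-bipartite graphs form a subclass of all graphs: any output-polynomial time algorithm for \dom\ on arbitrary graphs, restricted to co-bipartite inputs, is in particular an output-polynomial time algorithm for \dom(co-bipartite graphs). More generally, $P(\cC)\preceq_{op} P$ for every subclass $\cC$ of instances of $P$, by definition.

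Next I would chain these three facts. We have
\[
\dom(\text{co-bipartite graphs}) \preceq_{op} \dom \preceq_{op} \transhyp \preceq_{op} \dom(\text{co-bipartite graphs}).
\]
Since $\preceq_{op}$ is transitive — an output-polynomial time algorithm for the rightmost problem yields one for each problem to its left, composing the polynomial bounds — it follows that all three problems in the cycle are pairwise reducible to one another, hence pairwise equivalent in the sense of the definition preceding \Cref{sec:3}. Transitivity of $\preceq_{op}$ itself is immediate: if an output-polynomial algorithm for $P''$ gives one for $P'$, and one for $P'$ gives one for $P$, then one for $P''$ gives one for $P$.

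I do not anticipate a genuine obstacle here: the corollary is a bookkeeping consequence of \Cref{cor:leq} and \Cref{theorem}, with the only new ingredient being the (immediate) inclusion-based reduction \dom(co-bipartite graphs) $\preceq_{op}$ \dom. If anything needs care, it is simply making explicit that the class of co-bipartite graphs is nonempty and closed under nothing in particular — we only need that it is a subclass of $\sG$, which is clear from \Cref{co-bip} since $\cB(\cH)$ is co-bipartite for every hypergraph $\cH$. So the write-up is a short three-line argument citing the two prior results and invoking transitivity.
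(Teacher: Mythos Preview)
Your proposal is correct and matches the paper's approach exactly: the paper simply states that the corollary follows from Corollary~\ref{cor:leq} and Theorem~\ref{theorem}, and your write-up just makes explicit the obvious chaining (including the trivial link \dom(co-bipartite graphs) $\preceq_{op}$ \dom) and the transitivity of $\preceq_{op}$.
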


%\begin{proof}
%	From Corollary \ref{cor:leq} we have that \dom $\preceq_{op}$ \transhyp.
%	Since  (\dom restricted to
%	co-bipartite graphs)  $\preceq_{op}$ \dom,  we have that \transhyp
%	$\preceq_{op}$ \dom by Theorem \ref{theorem}. 
%\end{proof}

%By the previous corollary and Corollary \ref{cor:leq} it follows that \dom and \transhyp are equivalent.

From Corollary \ref{cor:equivalence}, we can deduce some equivalences
between \dom\ and some other enumeration problems. For instance, a
\emph{total dominating set} is a dominating set $D$ such that the
subgraph induced by $D$ contains no isolated vertex. We call \TDS\ the
enumeration problem of (inclusion-wise) minimal total dominating
sets. To prove the next lemma we associate with every hypergraph a
split-incidence graph.

\begin{defn}\label{split-Inci}
	The split-incidence graph $\cI'(\cH)$ associated with
	a hypergraph $\cH$ is the graph obtained from $\cI(\cH)$ by
	turning the independent set corresponding to $V(\cH)$ into a clique
	(see Figure \ref{fig:IH}). The resulting graph is a split graph. 
\end{defn}
\begin{figure}[h!] 
\centering \includegraphics[scale=1.3]{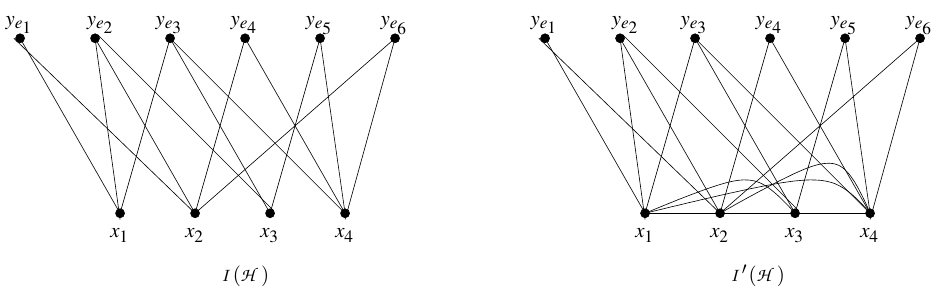} 
\caption{An example of
	the bipartite incidence graph $\cI(\cH)$ and the split-incidence graph
	$\cI'(\cH)$  of the hypergraph in Figure \ref{fig:3.1}.} \label{fig:IH}
\end{figure}

\begin{lem}\label{lem:tds}
    \TDS(split graphs),  \transhyp\ and \TDS\ are all equivalent.	
\end{lem}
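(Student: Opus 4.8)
The plan is to establish the chain of reductions $\transhyp \preceq_{op} \TDS(\text{split graphs}) \preceq_{op} \TDS \preceq_{op} \transhyp$, so that all three problems are equivalent. The last reduction, $\TDS \preceq_{op} \transhyp$, and more generally $\TDS(\cC) \preceq_{op} \transhyp$ for any class $\cC$, should follow from a neighbourhood-hypergraph argument analogous to Lemma~\ref{lem:3.1}: $D$ is a total dominating set of $G$ iff $D$ is a transversal of the \emph{open} neighbourhood hypergraph $(V(G),\{N_G(x)\mid x\in V(G)\})$, and restricting to minimal hyperedges one gets $tr$ of that hypergraph equals the set of minimal total dominating sets. (One must check the open neighbourhood hypergraph is well defined, i.e.\ $G$ has no isolated vertex, which is exactly the condition for a total dominating set to exist; this is a harmless case distinction.) The reduction $\TDS(\text{split graphs}) \preceq_{op} \TDS$ is trivial since split graphs form a subclass.

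The substantive part is $\transhyp \preceq_{op} \TDS(\text{split graphs})$, and here I would use the split-incidence graph $\cI'(\cH)$ introduced in Definition~\ref{split-Inci}, mirroring the structure of the proof of Theorem~\ref{theorem}. Given a hypergraph $\cH$ (which we may assume simple, after replacing it by $Min(\cH)$ in polynomial time), form $\cI'(\cH)$: the vertex side $V(\cH)$ becomes a clique, the hyperedge side $\{y_e\}$ stays independent, and $y_e$ is adjacent exactly to the vertices of $e$. The key claim to prove is: the minimal total dominating sets of $\cI'(\cH)$ are, up to a polynomially bounded number of exceptions, exactly the minimal transversals of $\cH$. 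A total dominating set $D$ must dominate every $y_e$, so $D\cap V(\cH)$ must hit every hyperedge — i.e.\ $D\cap V(\cH)$ is a transversal of $\cH$; and it must also totally dominate the clique side, which since $V(\cH)$ is a clique needs only $|D\cap V(\cH)|\ge 2$ (one vertex cannot totally dominate itself). Conversely, if $T$ is a transversal with $|T|\ge 2$ then $T$ is a total dominating set. The exceptional sets are those containing some $y_e$'s or those of size $1$; as in Lemma~\ref{lemma2} one argues via privacy (Lemma~\ref{lem:2.1}, suitably adapted to the total setting) that a minimal total dominating set involving the independent side must be very small — of the form $\{y_e, x\}$ or similar — of which there are at most $O(|V(\cH)|\cdot|\cE(\cH)|)$. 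Running the assumed $\TDS(\text{split graphs})$ algorithm on $\cI'(\cH)$ therefore outputs all minimal transversals of $\cH$ after discarding these polynomially many extra sets, in time polynomial in $\|\cH\|+\|tr(\cH)\|$, since $\|\cI'(\cH)\|=O(\|\cH\|)$.

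The main obstacle I anticipate is the careful bookkeeping of the ``exceptional'' minimal total dominating sets of $\cI'(\cH)$ — in the \TDS\ setting the minimality condition is governed by \emph{total} private neighbours (a vertex $x\in D$ needs a vertex privately dominated by $x$ via an edge), and the analysis of which small sets $\{y_e, y_{e'}\}$, $\{y_e, x\}$, or $\{x,x'\}\subseteq V(\cH)$ can be minimal total dominating sets requires enumerating several cases; one also has to handle degenerate hyperedges (singletons, or a vertex lying in all hyperedges) so that, e.g., a two-element transversal is genuinely a \emph{minimal} total dominating set and conversely. A secondary subtlety is ensuring that every minimal transversal of size $\ge 2$ actually arises as a minimal total dominating set and not merely as a total dominating set that properly contains a smaller one — but since transversals of $\cH$ restricted to $V(\cH)$ correspond bijectively to the domination of the $y_e$'s, minimality on one side transfers to the other. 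I expect these verifications to be routine but somewhat lengthy; none of them should present a genuine conceptual difficulty, so the overall structure of the lemma parallels that of Theorem~\ref{theorem} closely.
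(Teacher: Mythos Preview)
Your proposal is correct and follows the same overall route as the paper: the same chain of reductions, the open-neighbourhood hypergraph for $\TDS \preceq_{op} \transhyp$, and the split-incidence graph $\cI'(\cH)$ for $\transhyp \preceq_{op} \TDS(\text{split graphs})$.

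The one place where you diverge is in the analysis of $\cI'(\cH)$: you anticipate ``exceptional'' minimal total dominating sets containing some $y_e$ and plan to bound them polynomially, in the style of Lemma~\ref{lemma2}. The paper instead makes a harmless normalisation---assume $\cH$ has no dominating vertex (a vertex belonging to every hyperedge), which is without loss since $tr(\cH)=\{x\}\cup tr(\cH\setminus\{x\})$ for such $x$---and then shows the equality $\mathcal{TD}(\cI'(\cH))=tr(\cH)$ is \emph{exact}: no minimal total dominating set contains any $y_e$. Indeed, every $y_e$ must be totally dominated, so $D\cap V(\cH)$ hits every hyperedge and hence has size at least~$2$; if $y_e\in D$ then every vertex of $V(\cH)$ is already totally dominated through the clique by $D\cap V(\cH)$, and $y_e$ dominates no $y_{e'}$, so $D\setminus\{y_e\}$ is still a total dominating set, contradicting minimality. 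This eliminates the case analysis you were bracing for and makes the proof very short. Your looser approach would still yield the reduction, but the bookkeeping you flag as the ``main obstacle'' is in fact unnecessary.
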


\begin{proof}
It is enough to prove that \TDS\ $\preceq_{op}$ \transhyp\  and  \transhyp\ $\preceq_{op}$\TDS(split graphs) since  \TDS(split graphs)  $\preceq_{op}$ \TDS.
	We first show that \TDS\ $\preceq_{op}$ \transhyp\ (the reduction
	was first noted by \cite{thomasse07}).  For a graph $G$, we let
	$\cN_o(G):=(V(G),\{ N_G(x)\mid x\in V(G)\})$, the \emph{open
	    neighbourhood hypergraph}.  We claim that $\mathcal{TD}(G)=tr(\cN_o(G))$
	where $\mathcal{TD}(G)$ denotes the set of minimal total dominating sets
	of $G$. Let $G$ be a graph. It is easy to  see that
	$D\subseteq V(G)$ is a total dominating set in $G$ if and only
	if it is a transversal of $\cN_o(G)$. Indeed, if $D$ is a total
	dominating set of $G$, then for each $x\in V(G)$, $N_G(x)\cap
	D \ne \emptyset$. Therefore, $D$ is a transversal of $\cN_o(G)$.
	Conversely, if $T$ is a transversal of $\cN_o(G)$, then for each
	$x\in V(G)$, $T\cap N_G(x)\ne \emptyset$, \ie, $T$ is a total
	dominating set of $G$.

  We now show that \transhyp\ $\preceq_{op}$\TDS(split graphs).  Let $\mathcal{H}$
  be a hypergraph. Assume furthermore that $\mathcal{H}$ has no
  dominating vertex, \ie, a vertex belonging to all edges. Note that
  this case is not restrictive since if $x\in V(\cH)$ is a dominating
  vertex, then $tr(\mathcal{H})=\{x\}\cup tr(\mathcal{H}\setminus
  \{x\})$ and we can consider  this reduced hypergraph. We now show that
  $\mathcal{TD}(\cI'(\mathcal{H}))=tr(\mathcal{H})$.

  (i) Let $D$ be a minimal total dominating set of $\cI'(\cH)$, and
  let $e\in \cE(\cH)$. Then there exists $x\in V(\cH)\cap D$ such that
  $xy_e\in E(\cI'(\cH))$, \ie, $x\in e$.  We now claim that $y_e\notin D$
  for all $e\in \mathcal{E}(\cH)$. Otherwise, there exists $x\in e\cap D$
  and since $\cI'(\cH)[V(\cH)]$ is a clique, $D\setminus \{y_e\}$ is also
  a total dominating set, contradicting the minimality of $D$. Thus $D$
  is a transversal of $\cH$.

  (ii) Let $T$ be a transversal of $\cH$. Then for all $e\in \cE(\cH)$,
  $T\cap e\ne \emptyset$, \ie, for all $z\in V(\cI'(\cH))\setminus
  V(\cH)$, there exists $x\in T$ such that $xz\in E(\cI'(\cH))$. Since
  there is no dominating vertex, $|T|\geq 2$, and because
  $\cI'(\cH)[V(\cH)]$ is a clique, for all $x\in V(\cH)$,
  there exists $y\in T$ such that $xy\in E(\cI'(\cH))$. Hence, $T$ is
  a total dominating set of $\cI'(\cH)$.

  From (i) and (ii) we can conclude that
  $\mathcal{TD}(\cI'(\mathcal{H}))=tr(\mathcal{H})$.
\end{proof}
%Since \TDS $\preceq_{op}$ \transhyp, the previous result implies that \TDS and \transhyp are equivalent.
As a corollary of Lemma \ref{lem:tds} and Corollary
\ref{cor:equivalence} we get the following.
% which was not noticed before to our knowledge.
\begin{cor}
\dom\ and \TDS\ are equivalent. 
\end{cor}

These results  may enable new approaches to consider the
{\sc Trans-Enum}  problem as a graph problem. We will give some
evidence in the following sections. We conclude this section
	by stating the following decision problem \textsc{Dom-Graph} that arises
from Corollary \ref{cor:equivalence} and seems to be interesting on
	its own. \\\\\parbox{14cm}{
\textbf{Input.} A hypergraph $\cH$ and a positive integer $k$.\\
\textbf{Output.} Does there exist a graph $G$ and a set $F\subseteq
2^{V(G)}$ with $|F|\leq k$ \\ and such that
$\cD(G)= tr(\cH)\cup F$?\\\\}

It is an NP-complete problem because the problem of
\emph{realisability} of a hypergraph is a special case with $k=0$
\cite{BorosGZ08}. For $k=|V(\cH)|\cdot |\cE(\cH)|$, the
\textsc{Dom-Graph} problem can be solved in polynomial time by Corollary \ref{cor:equivalence}. We leave open its complexity for $1\leq k
< |V(\cH)|\cdot |\cE(\cH)|$.

\section{\dom\ in Split Graphs}\label{sec:4}
  
We recall that a graph $G$ is a \emph{split} graph if its vertex set can
be partitioned into an independent set $S$ and a clique $C$. Here we
consider $S $ to be maximal. We will denote a split graph $G$ by
the pair $(C(G)\cup S(G), E(G))$. We prove in this section that \dom(split graphs) admits a linear delay algorithm that uses polynomial space.
A minimal dominating set $D$ of a split graph $G$ can be partitioned
into a clique and an independent set, denoted respectively by $D_C:=D\cap
C(G)$ and $D_S:=D\cap S(G)$.  Lemma \ref{cliq} shows that a minimal
dominating set $D$ of a split graph is characterised by $D_C$. Note that
$D_S$ cannot characterise $D$, since several minimal dominating sets
can have the same set $D_S$. 
 
\begin{lem} \label{cliq}
     Let $G$ be a split graph and $D$ a minimal dominating set of
     $G$. Then $D_S=S(G)\setminus N_G(D_C)$.
\end{lem}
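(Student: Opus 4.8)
The plan is to prove the two inclusions $D_S \subseteq S(G)\setminus N_G(D_C)$ and $S(G)\setminus N_G(D_C)\subseteq D_S$ separately, using the characterisation of minimality via private neighbours (Lemma \ref{lem:2.1}).

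\medskip

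\textbf{First inclusion: $D_S\subseteq S(G)\setminus N_G(D_C)$.} Let $x\in D_S=D\cap S(G)$. Certainly $x\in S(G)$, so it suffices to show $x\notin N_G(D_C)$. Suppose for contradiction that $x$ has a neighbour $y\in D_C\subseteq C(G)$. Since $x\in S(G)$ and $S(G)$ is independent, every neighbour of $x$ lies in $C(G)$; I claim $x$ has no private neighbour with respect to $D$. Indeed $x$ itself is dominated by $y\in D\setminus\{x\}$, so $x\notin P_D(x)$; and any other vertex in $N_G[x]$ lies in $C(G)$, hence is adjacent to $y$ (as $C(G)$ is a clique) and is therefore also dominated by $D\setminus\{x\}$. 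Thus $P_D(x)=\emptyset$, contradicting minimality of $D$ via Lemma \ref{lem:2.1}. Hence $x\notin N_G(D_C)$, proving $D_S\subseteq S(G)\setminus N_G(D_C)$.

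\medskip

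\textbf{Second inclusion: $S(G)\setminus N_G(D_C)\subseteq D_S$.} Let $x\in S(G)$ with $x\notin N_G(D_C)$. Since $D$ is a dominating set, $x$ is dominated: either $x\in D$ or $x$ has a neighbour in $D$. A neighbour of $x$ in $D$ would lie in $C(G)\cap D=D_C$ (again because $S(G)$ is independent), which is excluded by $x\notin N_G(D_C)$. Therefore $x\in D$, and since $x\in S(G)$ we get $x\in D\cap S(G)=D_S$.

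\medskip

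Combining the two inclusions yields $D_S=S(G)\setminus N_G(D_C)$. I do not expect any serious obstacle here: the whole argument is a short case analysis exploiting that $S(G)$ is independent (so neighbours of an $S$-vertex live in the clique) and that $C(G)$ is a clique (so a clique-vertex in $D$ dominates all of $C(G)$). The only point requiring a little care is the private-neighbour computation in the first inclusion, where one must check that \emph{no} vertex of $N_G[x]$ — neither $x$ itself nor any clique neighbour — can serve as a private neighbour of $x$ once $x$ has a neighbour in $D_C$.
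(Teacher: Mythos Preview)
Your proof is correct. The paper actually states Lemma~\ref{cliq} without proof, treating it as immediate; your two-inclusion argument via Lemma~\ref{lem:2.1} is exactly the natural verification one would supply, and there is nothing to add.
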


%\begin{proof} To be explained even if straightforward.  \qed
%\end{proof}

\begin{lem}\label{split-priv}
    Let    $A \subseteq C(G)$. If every element in $A$ has a private neighbour then   $A \cup (S(G) \setminus N_G(A))$ is a minimal
    dominating set of $G$.
\end{lem}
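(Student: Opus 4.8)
\textbf{Proof plan for Lemma~\ref{split-priv}.}

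Let $A \subseteq C(G)$ such that every element of $A$ has a private neighbour (in $G$), and set $D := A \cup (S(G) \setminus N_G(A))$. The plan is to verify first that $D$ is a dominating set, and then to check the private-neighbour condition of Lemma~\ref{lem:2.1} for every vertex of $D$, distinguishing the vertices in $A \subseteq C(G)$ from those in $S(G) \setminus N_G(A)$. First I would show domination: any vertex of $S(G)$ is either in $N_G(A)$, hence dominated by $A$, or lies in $S(G) \setminus N_G(A) \subseteq D$; and any vertex of $C(G)$ is dominated because $C(G)$ is a clique and $D \cap C(G) = A$ — here one needs a tiny argument that $A \neq \emptyset$, or else that $D$ still dominates $C(G)$, which follows since each $c \in C(G) \setminus A$ is adjacent to every other vertex of the clique and, if $A = \emptyset$, then $S(G) \setminus N_G(A) = S(G)$ dominates all of $C(G)$ provided $C(G)$ has no vertex with empty neighbourhood into $S(G)$; I would handle the degenerate cases ($A=\emptyset$, isolated clique vertices) explicitly or note they are vacuous given the hypothesis.

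Next, the minimality. For $x \in A$: by hypothesis $x$ has a private neighbour $y \in N_G[x] \setminus N_G[A \setminus \{x\}]$ in $G$. I claim $y$ remains private with respect to $D$, i.e.\ $y \notin N_G[D \setminus \{x\}]$. Since $D \setminus \{x\} = (A \setminus \{x\}) \cup (S(G) \setminus N_G(A))$, and $y \notin N_G[A \setminus \{x\}]$ by choice of $y$, it remains to rule out that $y$ is dominated by some $s \in S(G) \setminus N_G(A)$. If $y \in S(G)$, then $y$ is non-adjacent to $s$ (both in the independent set $S(G)$) and $y \neq s$ because $y \in N_G[x] \cap S(G) \subseteq N_G(A) \cup \{x\}$ and $x \in C(G)$, so in fact $y \in N_G(A)$, disjoint from $S(G)\setminus N_G(A)$; if $y \in C(G)$ (possible only when $y = x$, since a private neighbour in $C(G)$ adjacent to $x$ would be adjacent to all of $A$), then $y = x \notin N_G[s]$ as $s \in S(G)\setminus N_G(A)$ and $x \in A$. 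Either way $y \in P_D(x)$. For $s \in S(G) \setminus N_G(A)$: I claim $s$ is its own private neighbour, i.e.\ $s \notin N_G[D \setminus \{s\}]$. Indeed $s$ is non-adjacent to every other vertex of $S(G)$ (independent set) and non-adjacent to every vertex of $A$ by definition of $S(G) \setminus N_G(A)$, and $s \notin A$ since $s \in S(G)$; hence $s \in P_D(s)$.

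Having shown $P_D(v) \neq \emptyset$ for all $v \in D$, Lemma~\ref{lem:2.1} gives that $D$ is a minimal dominating set, completing the proof. The main obstacle is the bookkeeping in the case $x \in A$: one must carefully use the split structure to argue that a private neighbour $y$ of $x$ in $G$ either lies in $S(G) \cap N_G(A)$ or equals $x$ itself, and in both sub-cases confirm that no vertex of $S(G) \setminus N_G(A)$ can dominate $y$ — this is where the specific shape of $D$ (namely that its $S$-part is exactly $S(G) \setminus N_G(A)$, disjoint from $N_G(A)$) does the work. The degenerate cases ($A = \emptyset$, clique vertices with no $S$-neighbour) should also be dispatched, though they are essentially trivial.
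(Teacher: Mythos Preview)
Your approach is exactly the paper's: verify domination, then invoke Lemma~\ref{lem:2.1} by exhibiting a private neighbour for each vertex of $D$, with every $s\in S(G)\setminus N_G(A)$ serving as its own private neighbour. The paper's proof is in fact terser than yours---it simply asserts that vertices of $A$ have private neighbours ``by assumption'' without checking that a private neighbour relative to $A$ remains one relative to $D$---so your extra care is welcome.

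There is one small slip in your case analysis. You claim that if a private neighbour $y$ of $x\in A$ lies in $C(G)$ then necessarily $y=x$, because otherwise $y$ would be adjacent to all of $A$. This reasoning breaks when $|A|=1$: if $A=\{x\}$ then $N_G[A\setminus\{x\}]=\emptyset$, so \emph{every} vertex of $C(G)$ is a private neighbour of $x$ with respect to $A$, and such a $y\in C(G)\setminus\{x\}$ may well be adjacent to some $s\in S(G)\setminus N_G(x)\subseteq D$, so it need not stay private in $D$. The repair is immediate: when $|A|\ge 2$ your argument already forces any private neighbour into $S(G)$ and goes through unchanged; when $A=\{x\}$, just take $y=x$ and observe that $x\notin N_G[D\setminus\{x\}]=N_G[S(G)\setminus N_G(x)]$ directly. (As for your worry about $A=\emptyset$: the paper assumes $S(G)$ is a maximal independent set, which forces every vertex of $C(G)$ to have a neighbour in $S(G)$, so $D=S(G)$ does dominate in that case.)
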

\begin{proof}
It is clear that $A\cup (S(G)\setminus N_G(A))$ is a dominating set. To see why it is minimal, it suffices to observe that every vertex $s\in S(G)\setminus N_G(A)$ has at least one private neighbour, namely vertex $s$ itself. Note that every vertex in $A$ has a private neighbour by assumption. Hence, $A\cup (S(G)\setminus N_G(A))$ is a minimal dominating set due to Lemma \ref{lem:2.1}.
\end{proof}
%\begin{proof}
%    If $A=\emptyset$, then we have $A \cup (S(G) \setminus N_G(A))=S(G)$
%    which is a minimal dominating set. Indeed for every $s\in S(G)$, we
%    have $s\in P_D(s)$. Moreover each element in $C$ is dominated, since
%    $S(G)$ is maximal.  Now let    $A \subseteq C(G)$ with $A\neq\emptyset$
%    such that every element in $A$ has a private neighbour. Let $y \in A$
%    and let $D:=A \cup (S(G) \setminus N_G(A))$. Then  $P_D(y)\neq
%    \emptyset$ since for every $s\in S(G)\setminus N_G[A]$, $N_G[y]\cap
%    N_G[s] \cap S(G)=\emptyset$.   Moreover, for any element $s\in S(G)\setminus
%    N_G[A]$, we have $P_{D}(s)=\{s\}$.  Finally since every vertex is
%    dominated we conclude that $D$ is a minimal dominating set. 
%%     Let $D$ be a minimal dominating set of $G$ and $A \subseteq D_C$. We show that
%%     $D'=A \cup (S(G) \setminus N_G(A))$ is a minimal dominating set.  If
%%     $A=\emptyset$, then $D'=S(G)$ and it is a minimal dominating set.  Now suppose
%%     that $A\not = \emptyset$ and let $x \in A$. Clearly $P_D(x)\neq \emptyset$ since
%%     $D$ is minimal.  This implies that $P_{D'}(x)$ is also not empty. Moreover,
%%     for any element $y\in D'_S$, we have $P_{D'}(y)=\{y\}$.  We conclude that $D'$
%%     is a minimal dominating set.
%\end{proof}

\begin{lem}\label{dom_split}
    Let $D$ be a minimal dominating set of a split graph $G$.  Then for all
    $A \subseteq D_C$, the set $A \cup (S(G) \setminus N_G(A))$ is a minimal
    dominating set of $G$.
\end{lem}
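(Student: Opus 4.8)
The key observation is that it suffices to verify, via Lemma~\ref{split-priv}, that every vertex of $A$ has a private neighbour in $G$; then $A\cup(S(G)\setminus N_G(A))$ is automatically a minimal dominating set. So the plan is to show: if $D$ is a minimal dominating set of the split graph $G$ and $A\subseteq D_C$, then every $x\in A$ has a private neighbour in $G$ (with respect to $A$, not with respect to $D$). Since $D$ is minimal, Lemma~\ref{lem:2.1} tells us each $x\in D_C$ has a private neighbour $y\in P_D(x)$, i.e.\ $y\in N_G[x]\setminus N_G[D\setminus\{x\}]$. The first step is to note that because $A\subseteq D_C\subseteq D$, we have $A\setminus\{x\}\subseteq D\setminus\{x\}$, hence $N_G[A\setminus\{x\}]\subseteq N_G[D\setminus\{x\}]$, so $y\notin N_G[A\setminus\{x\}]$ as well. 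Thus $y\in N_G[x]\setminus N_G[A\setminus\{x\}]$, which is exactly the statement that $x$ has a private neighbour with respect to $A$.

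With that in hand, I would simply invoke Lemma~\ref{split-priv} with this set $A\subseteq C(G)$: every element of $A$ has a private neighbour, so $A\cup(S(G)\setminus N_G(A))$ is a minimal dominating set of $G$, which is the claim.

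The only mild subtlety — and the step I would be most careful about — is the distinction between ``private neighbour with respect to $D$'' and ``private neighbour with respect to $A$.'' The point is that shrinking the dominating set from $D$ to $A$ can only \emph{enlarge} the set of private neighbours of a vertex it still contains (removing vertices from $D\setminus\{x\}$ shrinks $N_G[D\setminus\{x\}]$), so a private neighbour of $x$ in $D$ remains one in $A$. There is nothing genuinely hard here; the lemma is essentially a bookkeeping corollary of Lemma~\ref{split-priv} combined with the monotonicity of private neighbours under taking subsets. One should just make sure to state that monotonicity cleanly rather than leave it implicit.
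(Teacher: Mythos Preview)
Your proposal is correct and follows essentially the same approach as the paper: verify that every $x\in A$ has a private neighbour and then invoke Lemma~\ref{split-priv}. The paper's own proof compresses the monotonicity step you spell out into the single word ``Clearly''; your version makes explicit why a private neighbour of $x$ with respect to $D$ remains one with respect to the smaller set $A$, which is a welcome clarification but not a different argument.
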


\begin{proof}

    Let $D$ be a minimal dominating set of $G$ and let $A \subseteq D_C$. 
    Clearly each $x\in A$ has a private neighbour since $D$ is a minimal dominating set. According to Lemma \ref{split-priv}, $A \cup (S(G) \setminus N_G(A))$ is a minimal
    dominating set of $G$.
\end{proof}

%
%Soit $D'$ avec $D'_C=D\setminus \{x\}$ et $D'_S=D_S \cup P_D(x)$. Il est
%clair que $D'$ est un dominant de $G$, en effet comme $|D'_C|\geq 1$,
%$C$ est couvert par $D'$, de plus si $s\in S$, ou bien $s$ est couvert
%dans $D$ par un $y\neq x$ et donc $s$ est couvert par $y$ dans $D'$, ou
%bien $s\in P_D(x)$, et dans ce cas, $s\in D'$.
%Montrons maintenant que $D'$ est minimal.
%Soit $y\in D'_C$, remarquons que
%par définition,  $P_D(y)\cap P_D(x)=\emptyset$ et donc
%$P_{D'}(y)=P_D(y)\neq \emptyset$. Soit $s \in D'_S$, si $s\in D$
%alors $P_{D'}(s)=P_D(s)={s}\neq \emptyset$. Supposons que $s\notin D$
%alors $s\in P_D(x)$ et donc $\forall y\in C_D\setminus \{x\}$, ainsi $s\notin
%P_D(y)$  donc $s\in P_{D'}(s)$ et donc.
%Donc $\forall y\in D'$, $P_{D'}(y)\neq \emptyset$. 

A consequence of Lemmas \ref{cliq} and \ref{split-priv} is the following. 
\begin{cor} \label{cor} Let $G$ be a split graph. Then  there is a
  bijection between $\cD(G)$ and the set $\{A\subseteq C(G)  \mid \forall x\in A, x \text{ has a private neighbour}
\}$. %Moreover $\cDI(G)$ is closed under set-inclusion.
\end{cor}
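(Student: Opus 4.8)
The plan is to construct the bijection explicitly and check that it is well-defined in both directions using the lemmas already established. Let $\mathcal{S} := \{A \subseteq C(G) \mid \forall x \in A,\ x \text{ has a private neighbour in } G\}$ (where ``private neighbour'' is understood with respect to the dominating set $A \cup (S(G)\setminus N_G(A))$, matching the hypothesis of Lemma~\ref{split-priv}). Define $\Phi : \mathcal{S} \to \cD(G)$ by $\Phi(A) = A \cup (S(G)\setminus N_G(A))$; this lands in $\cD(G)$ precisely by Lemma~\ref{split-priv}. Define $\Psi : \cD(G) \to \mathcal{S}$ by $\Psi(D) = D_C = D\cap C(G)$; one must check $\Psi(D)$ really lies in $\mathcal{S}$, i.e.\ that every $x\in D_C$ has a private neighbour, which holds because $D$ is a minimal dominating set (Lemma~\ref{lem:2.1}) and, by Lemma~\ref{cliq}, $D = D_C \cup (S(G)\setminus N_G(D_C))$, so the private neighbour of $x$ relative to $D$ is also its private neighbour relative to $\Phi(D_C)=D$.

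The core of the argument is then showing $\Phi$ and $\Psi$ are mutually inverse. For $\Psi\circ\Phi = \mathrm{id}_{\mathcal{S}}$: given $A \in \mathcal{S}$, we have $\Phi(A) = A \cup (S(G)\setminus N_G(A))$, and since $A \subseteq C(G)$ while $S(G)\setminus N_G(A) \subseteq S(G)$ and $C(G)\cap S(G)=\emptyset$, intersecting with $C(G)$ recovers exactly $A$, so $\Psi(\Phi(A)) = A$. For $\Phi\circ\Psi = \mathrm{id}_{\cD(G)}$: given $D \in \cD(G)$, Lemma~\ref{cliq} gives $D_S = S(G)\setminus N_G(D_C)$, hence $D = D_C \cup D_S = D_C \cup (S(G)\setminus N_G(D_C)) = \Phi(D_C) = \Phi(\Psi(D))$. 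Both compositions being the identity establishes the bijection.

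The only subtlety — and the step I expect to need the most care — is matching the notion of ``private neighbour'' in the statement of the corollary with the one used in Lemma~\ref{split-priv}: the lemma's hypothesis is that each $x\in A$ has a private neighbour with respect to the specific dominating set $A\cup(S(G)\setminus N_G(A))$, and one should confirm that this is the intended reading of the set $\{A\subseteq C(G) \mid \forall x\in A,\ x\text{ has a private neighbour}\}$, and that for a minimal dominating set $D$ the private neighbours of vertices in $D_C$ (with respect to $D$) coincide with their private neighbours with respect to $\Phi(D_C)$ — which is immediate once Lemma~\ref{cliq} tells us $\Phi(D_C)=D$. No heavier machinery is required; everything reduces to the two cited lemmas plus the disjointness of $C(G)$ and $S(G)$.
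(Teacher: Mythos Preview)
Your proof is correct and follows exactly the approach the paper intends: the corollary is stated as an immediate consequence of Lemmas~\ref{cliq} and~\ref{split-priv}, and you have simply made the bijection $\Phi(A)=A\cup(S(G)\setminus N_G(A))$, $\Psi(D)=D_C$ explicit and verified both compositions using those two lemmas. Your discussion of the ``private neighbour'' subtlety is appropriate and resolves correctly, since by Lemma~\ref{cliq} one has $\Phi(D_C)=D$ for any $D\in\cD(G)$, so the two notions coincide.
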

%
%% \begin{proof}
%% The proof follows from Lemma \ref{cliq} and Property \ref{dom_split}.
%% \end{proof}

% In the following we propose an algorithm that searches for all subsets
% $A\subseteq C(G)$ such that   every element in $A$ has a private
% neighbour, and then outputs $A \cup (S(G)\setminus N_G(A))$. These sets
% define a tree rooted in $\emptyset$, and such that $A$ is a child of
% $A\setminus \{x\}$ where $x$ is the greatest (lexicographically)
% element in $A$.  We call the algorithm with $DominantSplit(S(G),G)$.
% 

We now describe an algorithm, which we call $DominantSplit$, that takes as input a split graph $G$ with a linear ordering $\sigma:V(G)\to \{1,\ldots, |V(G)|\}$ of its vertex set and a minimal
dominating set $D$ of $G$, and outputs all minimal dominating sets $Q$ of $G$ such that $D_C\subseteq Q_C$. Then, whenever $D=S(G)$, the algorithm enumerates all minimal dominating sets of $G$.  The algorithm starts by computing the largest vertex $y$ (with respect to the linear ordering $\sigma$) in $D_C$. Then, the algorithm checks whether the set
$D_C$ can be extended, i.e. whether there exists a vertex $x \in C(G)\setminus D_C$ which is greater than $y$ and such that every vertex in $D_C\cup\{x\}$ has a private neighbour. For
each such $x$, the algorithm builds the minimal dominating set $D'$ such that $D'_C=D_C\cup \{ x \}$ (which is unique by Lemma \ref{cliq}) and recursively calls the algorithm on $D'$. The pseudo-code
is given in Algorithm \ref{algo:split}.

\begin{algorithm2e}[h!]
    \dontprintsemicolon
\SetVline
\caption{$DominantSplit(G, \sigma,D)$}
\SetKw{dominant}{DominantSplit}
\SetKw{aoutput}{output}
\SetKw{Cov}{Cov}
\SetKw{nCov}{NewCov}
 \label{dominantsplit}
%\AlgData
\KwIn{A split graph $G=(C(G)\cup S(G),E(G))$, a linear ordering $\sigma:V(G)\to \{1,\ldots, |V(G)|\}$ and a minimal dominating
    set $D$ of  $G$.}
%\AlgResult
%\KwOut{$\cD(G)$ }
\Begin{
\aoutput(D)\;
%output$(D)$\\
$\Cov=\emptyset$\;
\nl Let $y\in D_C$ be such that $\sigma(y) =max\{\sigma(x)\mid x\in D_C\}$\; %%be lexicographically largest element of $D_C$\;
\nl \ForEach{$x\in  C(G) \setminus D_C$ and $ \sigma(x) >\sigma(y)$}{
\nl \If{each vertex in  $D_C\cup \{x\}$  has a private neighbour}{
\nl  $\Cov=\Cov\cup \{x\}$\;
}}
\nl  \ForEach{ $x \in  \emph{\Cov}$}{
    \nl $DominantSplit(G, \sigma, D_C\cup \{x\}\cup  (S(G)\setminus
    N_G(D_C\cup\{x\}))$) \;
 }}
\label{algo:split}
\end{algorithm2e}

%\begin{prop}
%    Calling $DominantSplit(S(G))$  enumerates the set $\cD(G)$ 	with $O(m+n)$ delay and uses space bounded by $O(n2)$ for every
%split graph $G$ with $n$ vertices and $m$ edges.
%\end{prop}
\begin{thm}\label{delay} Let $G$ be a split graph with $n$ vertices
  and $m$ edges and let $\sigma$ be any linear ordering of $V(G)$. Then $DominantSplit(G,\sigma,S(G))$
    enumerates the set $\cD(G)$ with $O(n+m)$ delay and uses space
    bounded by $O(n^2)$.
%% For a split graph $G$ with $n$ vertices and
%%     $m$ edges,\\ $DominantSplit(S(G),G)$
%%     enumerates the set $\cD(G)$ with $O(n+m)$ delay and uses space
%%     bounded by $O(n^2)$.
\end{thm}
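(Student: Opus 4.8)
The plan is to prove three things about \texttt{DominantSplit}$(G,\sigma,S(G))$: \emph{(1)} correctness (it outputs exactly $\cD(G)$, without repetition), \emph{(2)} the delay bound $O(n+m)$, and \emph{(3)} the space bound $O(n^2)$. The key conceptual fact, established by Corollary~\ref{cor}, is that a minimal dominating set $D$ of a split graph is uniquely determined by $D_C = D\cap C(G)$, and that the sets arising this way are exactly the sets $A\subseteq C(G)$ such that every vertex of $A$ has a private neighbour (in the minimal dominating set $A\cup(S(G)\setminus N_G(A))$). So the enumeration of $\cD(G)$ reduces to enumerating those subsets $A$ of the clique. Note that $S(G)$ itself is a minimal dominating set (each $s\in S(G)$ is its own private neighbour since $S(G)$ is maximal independent), so $A=\emptyset$ is a valid start, and the recursion is well-founded because each recursive call strictly enlarges $D_C$ while staying inside the finite set $C(G)$.

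For correctness I would argue by induction on $|C(G)\setminus D_C|$, using the ordering $\sigma$ to avoid repetitions. The claim to prove is: for every minimal dominating set $D$, the call \texttt{DominantSplit}$(G,\sigma,D)$ outputs, exactly once each, every $Q\in\cD(G)$ with $D_C\subseteq Q_C$ and such that $\min\{\sigma(x)\mid x\in Q_C\setminus D_C\} > \max\{\sigma(x)\mid x\in D_C\}$ — i.e. every $Q\supseteq D$ reachable by adding clique-vertices all larger than the current maximum. First, \emph{soundness}: line~3 only adds $x$ to \texttt{Cov} when every vertex of $D_C\cup\{x\}$ has a private neighbour, and by Lemma~\ref{split-priv} the set $D_C\cup\{x\}\cup(S(G)\setminus N_G(D_C\cup\{x\}))$ passed to the recursive call is then a genuine minimal dominating set, so only elements of $\cD(G)$ are ever output. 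Second, \emph{completeness}: given such a $Q\neq D$, let $x$ be the $\sigma$-smallest element of $Q_C\setminus D_C$; then $x>\sigma(y)$ where $y=\max_\sigma D_C$, and since $Q$ is a minimal dominating set with $D_C\cup\{x\}\subseteq Q_C$, Lemma~\ref{dom_split} (applied with $A=D_C\cup\{x\}\subseteq Q_C$) shows every vertex of $D_C\cup\{x\}$ has a private neighbour, so $x\in\texttt{Cov}$ and the recursive call on $D_C\cup\{x\}$ is made; $Q$ is output by the inductive hypothesis for that call. Third, \emph{no repetition}: two distinct recursive calls from the same invocation use incomparable extensions (different $\sigma$-smallest added vertices), and within a subtree the ``strictly increasing'' invariant guarantees every $Q$ is reached along a unique path, namely by successively adding the elements of $Q_C\setminus D_C$ in $\sigma$-increasing order.

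For the delay: each invocation first outputs $D$, then does $O(n)$ iterations of the \texttt{ForEach} at line~2, each requiring a private-neighbour check for the at most $n$ vertices of $D_C\cup\{x\}$; a private-neighbour check for a candidate set $A\cup(S(G)\setminus N_G(A))$ can be done in $O(n+m)$ total by a single scan (compute $N_G(A)$, then for each $a\in A$ check whether $a$ has a neighbour or is itself in the independent part outside $N_G(A\setminus\{a\})$ — standard domination bookkeeping). The subtlety is that a naive reading gives $O(n(n+m))$ work between two consecutive outputs, which is too much. The standard fix, which I expect to be the main obstacle to write cleanly, is the usual output-queue / alternating-output technique for DFS enumeration trees: interleave the ``pre-order'' output at a node with work done in its subtree so that the cost of the $O(n(n+m))$ pre-processing at a node is amortised against the $\geq 1$ outputs produced by its (non-empty) recursive children, and flush one buffered solution on the way up when a node is a leaf. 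Since every node of the recursion tree outputs at least one solution ($D$ itself) and the per-node overhead is polynomial, this yields $O(n+m)$ delay after reorganisation — here one must be slightly careful and possibly sharpen the per-node checks (maintaining incrementally, as $D_C$ grows by one vertex, the private-neighbour status of all clique vertices and the set $S(G)\setminus N_G(D_C)$, each update costing $O(\deg(x))$) to actually hit $O(n+m)$ rather than a larger polynomial. For the space bound: the recursion depth is at most $|C(G)|\leq n$, and each stack frame stores $D$, the set \texttt{Cov}, and the loop pointer, each of size $O(n)$, giving $O(n^2)$ space in total; the incremental data structures shared across the recursion add only $O(n+m)$. This completes the three parts.
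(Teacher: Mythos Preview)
Your proposal is correct and follows essentially the same approach as the paper: the same recursion tree indexed by $D_C$, the same correctness induction (the paper inducts on $|D_C|$, you on $|C(G)\setminus D_C|$, which is equivalent), the same uniqueness-of-parent argument for non-repetition, and the same $O(n^2)$ space accounting. On the delay you are in fact more explicit than the paper, which simply asserts that a private-neighbour check costs $O(n+m)$ via a marks array without explaining how the loop over all candidate $x$ stays within $O(n+m)$; your incremental-maintenance remark (updates of cost $O(\deg(x))$ per candidate, summing to $O(m)$) is exactly the missing ingredient, whereas the alternating-output/amortisation detour you describe first is a red herring here --- it rearranges outputs but cannot by itself bring the per-node work below $O(n(n+m))$.
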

\begin{proof}
   We first prove the correctness of the algorithm. We first prove that
   each minimal dominating set is listed once.
   
   We prove the completeness  using induction on the number of elements in the clique $C(G)$. First the only minimal dominating  set
    $D$ of $G$ such that $|D_C|=0$ is $S(G)$ which corresponds to the
   first call of the algorithm. Indeed, if $D\cap C=\emptyset$ then each
   vertex of $S(G)$ must belong to $D$ to dominate itself.  Moreover,
   by Lemma \ref{split-priv} it
   is a minimal dominating set. Assume now
   that every $D'\in \cD(G)$ such that $|D'_C|\leq k$, is returned by the
   algorithm and let $D$ be a minimal dominating set such that $|D_C|=k+1$.
   Let $x$ be the greatest vertex of $D_C$ (with respect to $\sigma$). By Lemma \ref{dom_split},
   $D':=D_C \setminus\{ x \}\cup(S(G)\setminus N_G(D_C\setminus \{ x \}))$ is
   a minimal dominating set of $G$. Furthermore, $|D'_C|=|D_C \setminus
   \{ x \}|=k$, and then $D'$ is returned by the algorithm (by the inductive hypothesis). Note also
   that since $D\in \cD(G)$, every vertex in $D'_C \cup \{ x \}=D_C$ has
   a private neighbour, and since $x$ is greater than all vertices
   of $D'_C$ (w.r.t. $\sigma$), $x$ is added to  \textbf{Cov} by the algorithm. Then in the next
   step $DominantSplit(G,\sigma, D'_C\cup \{x\}\cup  (S(G)\setminus
   N_G(D'_C\cup\{x\}))))$ will be called and then $D'_C\cup \{x\}\cup  (S(G)\setminus
   N_G(D'_C\cup\{x\}))= D_C \cup (S(G)\setminus  N_G(D_C)) $ will be
   returned, which is equal to $D$ by Lemma \ref{cliq}.
   
   Now let us show that if  a set $A$ is returned, then $A$ is a minimal dominating set of $G$. We have two cases:
   either $A=S(G)$ (which corresponds to the first call) or $A=D_C\cup \{x\}\cup  (S(G)\setminus N_G(D_C\cup\{x\}))$. 
   Clearly if  $A=S(G)$ then $A$ is a minimal dominating set. Now if
   $A=D_C\cup \{x\}\cup  (S(G)\setminus N_G(D_C\cup\{x\}))$ then every
   element in $D_C\cup \{x\}$ has a private neighbour (cf.  Line 3 of
   algorithm \ref{dominantsplit}). Using Lemma \ref{split-priv} we conclude that $A$ is a minimal dominating set.
   
   Moreover each minimal dominating set is listed exactly
   once. Indeed, a minimal dominating set $D'$ is obtained by a call
   $DominantSplit(G,\sigma, D_C\cup \{x\}\cup (S(G)\setminus
   N_G(D_C\cup\{x\})))$ where $D$ is the minimal dominating set such
   that $D_C= D'_C\setminus \{x\}$ with $x$ the greatest
   vertex in $D'_C$ (with respect to $\sigma$), which is unique by Lemma
   \ref{cliq}.

   We now	discuss the delay and space. The delay between the output of $D$ and the
	next output is dominated by the time needed to check if any element in
	$D_C\cup \{x\}$ has a private neighbour. 
	
	To do so, we use an array marks[1..n] initialised to $0$, and
	for each element in $D_C\cup \{ x \}$ we increase  the marks of its
	neighbours by $1$. To check that every element $y$ in $D_C\cup
	\{x\}$ has a private neighbour, it suffices to check that $y$
	has at least a neighbour with mark $1$. Note that we check only neighbourhood in the stable $S(G)$.  This can be done in
	time $O(n+m)$.  
	Since the depth of the recursive tree is at most
	$n$ and at each node we store the set {\bf Cov}, the space
	memory is bounded by $O(n^2)$. 
%(In each recursive call of   $DominantSplit$, the graph is given by reference.) 
\end{proof}

\section{Completion}\label{sec:6}

In this section we introduce the notion of the maximal extension of a graph by
keeping the set of minimal dominating sets invariant. The idea behind this
operation is to maintain invariant the minimal  hyperedges, with respect to
inclusion, in $\cN(G)$.

For a graph $G$ we denote by $IR(G)$ %\emph{irredundant}  vertices 
the set of
vertices (called \emph{irredundant} vertices) that are minimal with respect to the neighbourhood inclusion. In
case of equality between minimal vertices, exactly one is considered as
irredundant.  All the other vertices are called \emph{redundant} and
the set of redundant vertices is denoted by $RN(G)$.
% A vertex $x$ of a graph $G$ is said to be \emph{irredundant} if for all
% $y\neq x$, $N_G[y]\not \subseteq N_G[x]$, otherwise it is called
% \emph{redundant}.  The set of irredundant (resp. redundant) vertices is
% denoted by $IR(G)$ (resp.  $RN(G)$). 
The \emph{completion} graph of a graph $G$ is the graph $G_{co}$ with
vertex set $V(G)$ and edge set $E(G)\cup \{xy\mid x,y\in RN(G),~ x\not
= y\}$, \ie,  $G_{co}$ is obtained from $G$ by adding precisely those edges to $G$ that make   $RN(G)$ into a clique.
% In order to preserve the set of minimal dominating sets 
% we have to make an exception for the particular case of 
% true twins. Namely, according to the definition, for a 
% set of true twins all the vertices will be counted as redundant
% vertices, and in some cases, it will not preserve the set of minimal 
% dominating sets. That is why we need to take into account the following 
% cases:
% 
% %%
% If there is a maximal subset $S$ of true twins such that for all $y\notin
% S$, and $x\in S$ $N_G[y]\not \subseteq N_G[x]$,  exactly one vertex of $S$ is counted as an irredundant and the other ones are
% considered as redundant vertices.
% %%
% 
Note that the completion graph of a split graph
$G$ is $G$ itself, since all vertices in $S(G)$ are irredundant. However, the completion operation does not preserve
the chordality of a graph. For instance, trees are chordal graphs but
their completion graphs are not always chordal. Figure \ref{fig:5.1}
gives some examples of completion graphs.

\begin{rem}\label{rem:1} Note that if a vertex $x$ is
redundant, then there exists an irredundant vertex $y$ such that
$N_G[y]\subseteq N_G[x]$. Indeed since $x$ is redundant, the set
$F:=\{ z\in V \mid N_G[z]\subseteq N_G[x] \}$ is not empty. Hence, any
minimal (with respect to neighbourhood inclusion) vertex $y$ from $F$
is an irredundant vertex.
\end{rem}

\begin{figure}[h!]
% \centering
\begin{center}
\includegraphics[scale=.85]{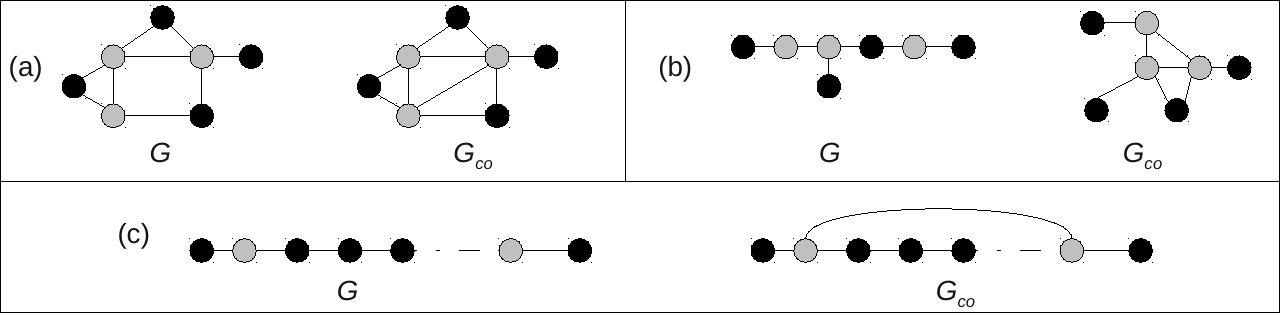}
\end{center}
\caption{(a) a non-chordal graph whose completion is a split graph (b)
    a chordal graph with an induced $P_6$ whose completion is a split
    graph (c) a path $P_n$ whose completion is not chordal.  Redundant
    vertices are represented in grey.} \label{fig:5.1}
\end{figure}

\begin{prop} \label{lem:5.1} For any graph $G$, we have
	$\mathcal{D}(G)=\mathcal{D}(G_{co})$.
\end{prop}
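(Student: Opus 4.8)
We need to show $\mathcal{D}(G) = \mathcal{D}(G_{co})$. Let me think about what adding edges among redundant vertices does.

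First, recall: a dominating set is a transversal of $\mathcal{N}(G) = (V(G), \{N_G[x] : x \in V(G)\})$, and equivalently of $Min(\mathcal{N}(G))$. So if I can show $Min(\mathcal{N}(G)) = Min(\mathcal{N}(G_{co}))$, then $\mathcal{D}(G) = tr(Min(\mathcal{N}(G))) = tr(Min(\mathcal{N}(G_{co}))) = \mathcal{D}(G_{co})$ by Lemma \ref{lem:3.1}. So the whole proposition reduces to showing the minimal closed-neighborhood hyperedges are unchanged.

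Now, what happens to $N_G[x]$ when we form $G_{co}$? If $x \in IR(G)$ (irredundant), then $x$ is not incident to any new edge, but its neighborhood could still grow if some neighbor $z$ of $x$ became adjacent to new things — no wait, adding edge $yz$ only changes $N[y]$ and $N[z]$. So $N_{G_{co}}[x] = N_G[x]$ for $x \in IR(G)$, since $x$ is not an endpoint of any added edge. If $x \in RN(G)$ (redundant), then $N_{G_{co}}[x] = N_G[x] \cup RN(G)$, since $x$ becomes adjacent to all other redundant vertices.

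The key claim: the minimal hyperedges of $\mathcal{N}(G)$ are exactly $\{N_G[x] : x \in IR(G)\}$ — indeed every minimal closed neighborhood is attained by an irredundant vertex (by definition of $IR(G)$, and Remark \ref{rem:1} tells us every redundant $x$ has some irredundant $y$ with $N_G[y] \subseteq N_G[x]$, so $N_G[x]$ is not minimal unless it equals $N_G[y]$). Since these are untouched, $Min(\mathcal{N}(G)) \subseteq \mathcal{N}(G_{co})$. For the reverse direction, I need: no new hyperedge $N_{G_{co}}[x] = N_G[x] \cup RN(G)$ (for $x \in RN(G)$) is a minimal hyperedge of $\mathcal{N}(G_{co})$, i.e., it still contains some $N_{G_{co}}[y] = N_G[y]$ with $y \in IR(G)$. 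By Remark \ref{rem:1} there is an irredundant $y$ with $N_G[y] \subseteq N_G[x] \subseteq N_{G_{co}}[x]$, and $N_{G_{co}}[y] = N_G[y]$, so indeed $N_{G_{co}}[x]$ properly contains a smaller hyperedge (or equals it, but then it's fine for the $Min$ comparison). Hence $Min(\mathcal{N}(G_{co})) = \{N_G[y] : y \in IR(G)\} = Min(\mathcal{N}(G))$, and we conclude.

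The main obstacle — really the only subtle point — is verifying that adding edges within $RN(G)$ does not accidentally \emph{shrink} the minimal structure, e.g., that no irredundant vertex's hyperedge gets "swallowed" or that the list $\{N_G[y] : y \in IR(G)\}$ is genuinely the set of minimal hyperedges in $G_{co}$ and not a proper superset (which could happen if some $N_G[y_1] \subseteq N_{G_{co}}[y_2]$ newly, but since irredundant neighborhoods are unchanged this can't happen) or a proper subset. One should also handle the tie-breaking convention in the definition of $IR(G)$ carefully: when $N_G[y_1] = N_G[y_2]$ we keep only one as irredundant, but both are minimal hyperedges and they're the \emph{same} hyperedge, so $Min$ as a set of hyperedges is well-defined regardless. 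I'd lay this out in two short steps: (1) $N_{G_{co}}[x] = N_G[x]$ for $x \in IR(G)$ and $\supseteq$ some irredundant neighborhood for $x \in RN(G)$; (2) conclude $Min(\mathcal{N}(G)) = Min(\mathcal{N}(G_{co}))$ and apply Lemma \ref{lem:3.1}.
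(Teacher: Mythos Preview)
Your argument is correct. The key observations you use---that $N_{G_{co}}[x]=N_G[x]$ for every $x\in IR(G)$, and that every redundant vertex has an irredundant vertex below it in the neighbourhood order (Remark~\ref{rem:1})---are exactly the ones the paper uses.

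The packaging differs slightly. The paper does not pass through $Min(\cN(\cdot))$: it argues directly that the \emph{dominating sets} (not just the minimal ones) of $G$ and $G_{co}$ coincide. One containment is immediate since $E(G)\subseteq E(G_{co})$; for the other, given a dominating set $D$ of $G_{co}$ and any $x\in V(G)$, either $x\in IR(G)$ and $N_G[x]=N_{G_{co}}[x]$ meets $D$, or $x\in RN(G)$ and one picks $y\in IR(G)$ with $N_G[y]\subseteq N_G[x]$, so $D\cap N_G[x]\supseteq D\cap N_G[y]=D\cap N_{G_{co}}[y]\neq\emptyset$. Your route via $Min(\cN(G))=Min(\cN(G_{co}))$ is a touch more structural---it makes explicit the point announced at the start of the section, that completion keeps the minimal closed neighbourhoods invariant---while the paper's route is a couple of lines shorter and avoids invoking Lemma~\ref{lem:3.1}. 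Either way the substance is the same.
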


\begin{proof}
	Let $D$ be a dominating set of a graph $G$. Since  $E(G) \subseteq
	E(G_{co})$, $D$ is also a dominating set of $G_{co}$.  Now suppose that
	$D$ is a dominating set of $G_{co}$ and let $x\in V(G)$. If $x\in IR(G)$,
	then $N_{G}[x]=N_{G_{co}}[x]$, hence $D\cap N_{G}[x]\neq \emptyset$.  If
	$x\in RN(G)$, then, due to Remark \ref{rem:1}, there exists $y\in IR(G)$ such that $N_G[y]\subseteq
	N_G[x]$. Hence $D\cap N_{G}[y] \subseteq D\cap N_{G}[x] \neq \emptyset$.
	Therefore, $D$ is a dominating set of $G$. {Since $G$ and $G_{co}$ have
		the same dominating sets, we deduce that $\cD(G)=\cD(G_{co})$}.
\end{proof}

%One can remark that all the reduced graphs $(G_{co})_r$ associated
%with $G_{co}$ are isomorphic. Indeed, the choice of the representative
%vertex chosen for a class $X$ when $X\subseteq RN(G)$ has no
%importance anymore, since all the elements of the class become true
%twins. $(G_{co})_r$ is then unique and is defined as the \emph{unique
%  canonical reduced graph} associated with $G$.

The following proposition claims the optimality of the completion in
the sense that no other edges can be added to the graph without
changing the set of minimal dominating sets.

\begin{prop}
Let $G$ be a graph and let $G'$ be $(V(G), E(G)\cup \{e\})$ with $e$
a non-edge of $G$. Then  $\cD(G)\neq \cD(G')$  if and only if $e \cap IR(G)\neq \emptyset$, 
\end{prop}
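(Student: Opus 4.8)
The plan is to prove the two directions of the equivalence separately, in both cases working through the closed neighbourhood hypergraph and Proposition~\ref{prop:2.1} (the involution $tr(tr(\cH))=\cH$). Write $e=xy$ for the added non-edge, so $G'=(V(G),E(G)\cup\{xy\})$.

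First I would prove the contrapositive of the forward direction: if $e\cap IR(G)=\emptyset$, then $\cD(G)=\cD(G')$. So assume both $x$ and $y$ are redundant in $G$. The key observation is that adding the edge $xy$ changes only the closed neighbourhoods $N[x]$ and $N[y]$, enlarging each by one element; every other closed neighbourhood is unchanged. I claim this does not change $Min(\cN(G))$. Indeed, by Remark~\ref{rem:1}, since $x\in RN(G)$ there is an irredundant vertex $x'$ with $N_G[x']\subsetneq N_G[x]$, and since $x'$ is still irredundant with the same neighbourhood in $G'$ (it is not an endpoint of $e$ — here I need to check $x'\ne y$, which holds because $N_G[x']\subseteq N_G[x]$ forces $x'\notin N_G[x]$ only if... more carefully: if $x'=y$ then $N_G[y]\subseteq N_G[x]$, but then after adding $xy$ we would have $x\in N_{G'}[y]\subseteq$... this needs a short separate argument, or one picks $x'$ minimal so that $x'$ is genuinely irredundant and distinct from $y$). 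The point is that $N_{G'}[x]$ still strictly contains some minimal closed neighbourhood, hence $N_{G'}[x]\notin\cE(Min(\cN(G')))$, and likewise for $y$; and the minimal hyperedges among the untouched neighbourhoods are exactly those of $Min(\cN(G))$. So $Min(\cN(G'))=Min(\cN(G))$, and by Lemma~\ref{lem:3.1}, $\cD(G')=tr(Min(\cN(G')))=tr(Min(\cN(G)))=\cD(G)$.

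For the converse I would prove: if $e\cap IR(G)\ne\emptyset$, then $\cD(G)\ne\cD(G')$. Say $x\in IR(G)$. Now $N_{G'}[x]=N_G[x]\cup\{y\}$ is a new, strictly larger closed neighbourhood, but I must argue it actually affects $Min$ and hence $tr$. The cleanest route: exhibit a minimal dominating set that changes. Since $x$ is irredundant in $G$, $N_G[x]$ is a minimal hyperedge of $\cN(G)$ (or at least, some minimal transversal of $\cN(G)$ selects $x$ as the private-neighbour witness for $N_G[x]$); more usefully, by Corollary~\ref{lemx} applied cleverly, there is a minimal dominating set $D$ of $G$ in which some vertex $z$ has $x$ as its unique private neighbour, i.e. $N_G[z]\setminus N_G[D\setminus\{z\}]=\{x\}$ with $z\ne y$ arrangeable. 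Passing to $G'$, the vertex $x$ now also lies in $N_{G'}[y]$; if $y\in D$ or $y$ is dominated-from-inside in a way that now covers $x$, the privateness of $x$ for $z$ is destroyed, so $D\setminus\{z\}$ becomes dominating in $G'$ and $D\notin\cD(G')$. The honest difficulty is choosing $D$ (and orienting which endpoint is which) so that adding $xy$ genuinely kills privateness; alternatively one argues at the hypergraph level that $N_{G'}[x]\supsetneq N_G[x]$ together with irredundancy of $x$ forces $Min(\cN(G'))\ne Min(\cN(G))$ and then invokes that $tr$ is injective on simple hypergraphs (Proposition~\ref{prop:2.1}) to conclude $tr(Min(\cN(G')))\ne tr(Min(\cN(G)))$, i.e. $\cD(G')\ne\cD(G)$.

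\textbf{Main obstacle.} The subtle point throughout is bookkeeping about whether the endpoints $x,y$ of the new edge coincide with the auxiliary irredundant vertices produced by Remark~\ref{rem:1}, and ensuring that ``$Min(\cN(G))$ changes'' is equivalent to ``$\cD(G)$ changes'' — the latter needs the involution property, and crucially needs $\cN(G)$ and $\cN(G')$ restricted to minimal hyperedges to be \emph{simple} hypergraphs so Proposition~\ref{prop:2.1} applies, plus the fact (from Corollary~\ref{lemx}/Corollary~\ref{cor:2.1}) that the vertex set is fully covered. The forward direction ($e\cap IR(G)=\emptyset\Rightarrow$ no change) is the easier half once one sees that redundant-to-redundant edges never create new minimal neighbourhoods; the converse, showing the change is genuine rather than potential, is where I expect to spend the most care.
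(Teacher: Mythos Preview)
Your approach is essentially the paper's: compare $Min(\cN(G))$ with $Min(\cN(G'))$ and use Lemma~\ref{lem:3.1} together with Proposition~\ref{prop:2.1}. Two remarks will tighten your write-up.

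\medskip
\textbf{The easy direction.} Your worry about whether the auxiliary irredundant vertex $x'$ from Remark~\ref{rem:1} could coincide with $y$ is a non-issue: you are in the case $x,y\in RN(G)$, while $x'\in IR(G)$ by construction, so $x'\neq y$ (and $x'\neq x$) immediately. The paper disposes of this direction in one line: since $e\subseteq RN(G)$, every $v\in IR(G)$ satisfies $v\notin\{x,y\}$, hence $N_{G}[v]=N_{G'}[v]$; therefore $\cN'(G)=\cN'(G')$ (where $\cN'(G):=\{N_G[v]\mid v\in IR(G)\}=Min(\cN(G))$), and $\cD(G)=tr(\cN'(G))=tr(\cN'(G'))=\cD(G')$.

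\medskip
\textbf{The converse.} Commit to your second alternative (the hypergraph-level argument); the paper does not try to exhibit an explicit dominating set that changes. What you are missing is a short case split on whether $x$ remains irredundant in $G'$. If $x\in IR(G')$, then $N_{G'}[x]\in\cN'(G')$ contains $y$, whereas no hyperedge of $\cN'(G)$ equal to $N_{G'}[x]$ can exist (it would strictly contain $N_G[x]$, contradicting minimality); hence $\cN'(G')\neq\cN'(G)$. If $x\in RN(G')$, then $N_G[x]\in\cN'(G)$ but $N_G[x]\notin\cN'(G')$, again giving $\cN'(G')\neq\cN'(G)$. In either case, since $\cN'(G)$ and $\cN'(G')$ are simple, Proposition~\ref{prop:2.1} gives $tr(\cN'(G))\neq tr(\cN'(G'))$, i.e.\ $\cD(G)\neq\cD(G')$. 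Your ``main obstacle'' about simplicity is handled by the very definition of $IR(G)$ (one representative per minimal neighbourhood), so this is not where the work lies; the case split is.
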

\begin{proof}
Consider
 $\mathcal{N'}(G):=\{N_{G}[v]\mid v\in IR(G)\}$ and
 $\mathcal{N'}(G'):=\{N_{G'}[v]\mid v\in IR(G')\}$. By the definition of irredundant vertices, for every $u,v\in IR(G)$, we have $N_{G}[u] \subseteq N_{G}[v]$ implies that $u=v$ and therefore $N_{G}[u] = N_{G}[v]$. Hence  $\mathcal{N'}(G)$ and $\mathcal{N'}(G')$ are simple and correspond respectively to $Min(\mathcal{N}(G))$ and $Min(\mathcal{N}(G'))$.
 
 Let $e:=xy$  such that $e\cap IR(G)\neq \emptyset$, and assume without loss of generality that $x\in IR(G)$.  Assume that $x$ is still
 irredundant in $G'$, \ie, $x\in IR(G')$. Then since $y\in N_{G'}[x]$ and
 $y\notin N_{G}[x]$, $\mathcal{N'}(G')\neq \mathcal{N'}(G)$.  Moreover, thanks to Lemma \ref{lem:3.1},  we
 have $\cD(G)=tr(\cN'(G))$ and $\cD(G')=tr(\cN'(G'))$, and since $\cN'(G)$ and
 $\cN'(G')$ are simple, we have $\cD(G')\neq\cD(G)$ (see Proposition \ref{prop:2.1}).  Assume now that $x\in
 RN(G')$. Hence, $N_G[x]\notin \cN'(G')$ and since $N_G[x]\in \cN'(G)$, we
 have $\cN'(G)\neq \cN'(G')$. 
 
 Assume now that $e \cap IR(G)= \emptyset$, i.e. $e\subseteq RN(G)$.
 Then $IR(G)=IR(G')$ and for all $v\in IR(G)$, $N_{G}[v]=N_{G'}[v]$. Thus we have  $Min(\cN(G))=\cN'(G)=\cN'(G')=Min(\cN(G'))$ and then   $\cD(G)=tr(\cN(G))=tr(\cN(G'))=\cD(G')$.  
\end{proof}

We now  show how to  use completion to get an output-polynomial time
algorithm for the \dom\ problem restricted to $P_6$-free chordal graphs.  Let us
notice that this class properly contains the class of split graphs. The results that
follow were  already published in \cite{KLMN11} without proofs. A vertex is \emph{simplicial} if the graph induced by its neighbourhood is a
clique.

\begin{prop}\label{lemme_P_6}
	If $G$ is a  $P_6$-free chordal graph, then for all $x\in IR(G)$,
	$x$ is a simplicial vertex in $G_{co}$.  Furthermore, the set
	$IR(G)$ is an independent set in $G_{co}$.
\end{prop}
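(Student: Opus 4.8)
The plan is to analyze what it means for a vertex $x \in IR(G)$ to have a non-simplicial closed neighbourhood in $G_{co}$, and to use the $P_6$-free and chordality hypotheses to derive a contradiction. First I would unfold the definitions: if $x$ is not simplicial in $G_{co}$, there exist $u, v \in N_{G_{co}}(x)$ with $uv \notin E(G_{co})$. Since $RN(G)$ is a clique in $G_{co}$, at least one of $u, v$ — say $u$ — must lie in $IR(G)$ (indeed if both were redundant they would be adjacent in $G_{co}$). Next I would trace each edge $xu$, $xv$ back to $G$: an edge incident to the irredundant vertex $x$ in $G_{co}$ is already an edge of $G$ (completion only adds edges inside $RN(G)$), so $xu, xv \in E(G)$; similarly $uv \notin E(G_{co})$ implies $uv \notin E(G)$. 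So in $G$ we have an induced path $u - x - v$ with $u \in IR(G)$.

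The core of the argument is then to produce a forbidden induced subgraph. Since $u \in IR(G)$ is minimal for neighbourhood inclusion, $N_G[u] \not\subseteq N_G[x]$ (else $u$'s minimality would force $N_G[u] = N_G[x]$, but $v \in N_G[x] \setminus N_G[u]$ since $uv$ is a non-edge — contradiction, unless $u = x$ which is excluded). Hence there is a vertex $u'$ with $u' \in N_G[u]$ and $u' \notin N_G[x]$; since $u'$ is not adjacent to $x$ and $u' \ne x$, and $uu' \in E(G)$, we get a path $u' - u - x - v$. By the same reasoning applied to... actually here I should be careful: I also want to extend past $v$. Note $v$ need not be irredundant, so instead I would exploit Remark~\ref{rem:1}: if $v \in RN(G)$ there is $w \in IR(G)$ with $N_G[w] \subseteq N_G[v]$, and I can argue about $w$; if $v \in IR(G)$ I argue about $v$ directly. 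Either way I obtain on the $v$-side a vertex $v'$ with $vv' \in E(G)$ (or at bounded distance from $v$) that is non-adjacent to $x$. Chaining $u' - u - x - v - v'$ gives a walk on five vertices, and by carefully checking chords — using chordality of $G$ to rule out the "long" chords and the structure of private/irredundant vertices to rule out the others — I would argue this either is already an induced $P_5$ that must close up badly, or extends to an induced $P_6$, contradicting $P_6$-freeness. The main obstacle, and the step I would spend the most care on, is controlling the adjacencies among $u', u, x, v, v'$: I must show enough of these pairs are non-adjacent to get an induced $P_5$ or $P_6$, and this is exactly where chordality (forbidding induced $C_4, C_5$) combines with the neighbourhood-minimality of the irredundant vertices $u$ (and $w$) to pin down the configuration.

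For the second assertion — that $IR(G)$ is independent in $G_{co}$ — I would argue by contradiction from an edge $xy$ with $x, y \in IR(G)$. Such an edge lies in $E(G)$ (completion adds nothing incident to irredundant vertices), so $xy \in E(G)$. Since $x$ is minimal for neighbourhood inclusion and $y \ne x$, $N_G[y] \not\subseteq N_G[x]$, so there is $y' \in N_G[y] \setminus N_G[x]$; symmetrically there is $x' \in N_G[x] \setminus N_G[y]$. Then $x' - x - y - y'$ is a path in $G$; it is induced unless $x'y' \in E(G)$, in which case $x', x, y, y'$ form a $C_4$, contradicting chordality. So $x' - x - y - y'$ is an induced $P_4$ in the chordal graph $G$; now I would try to extend it to a $P_6$ using the same neighbourhood-minimality trick on $x$ and $y$ to find further vertices beyond $x'$ and $y'$, obtaining a contradiction with $P_6$-freeness. (Alternatively, and perhaps more cleanly, one extends only one side to reach a $P_5$ and then handles the remaining case separately — I would pick whichever bookkeeping is shorter once the chord analysis is in hand.) Here too the delicate point is the chord-checking when gluing the extra vertices on, which is the recurring technical heart of both parts of the proposition.
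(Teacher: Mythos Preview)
Your plan for the first assertion is essentially the paper's argument, but two points deserve sharpening. First, the detour through Remark~\ref{rem:1} on the $v$-side is unnecessary: the existence of both $u'\in N_G[u]\setminus N_G[x]$ and $v'\in N_G[v]\setminus N_G[x]$ follows directly from the irredundancy of $x$ (not of $u$), since $x$ minimal and $v\in N_G[x]\setminus N_G[u]$, $u\in N_G[x]\setminus N_G[v]$ rule out $N_G[u]\subseteq N_G[x]$ and $N_G[v]\subseteq N_G[x]$ respectively. Second, and more importantly, you stop at a five-vertex path $u'\,u\,x\,v\,v'$ and say it ``must close up badly, or extends to an induced $P_6$'': but an induced $P_5$ is perfectly legal in a $P_6$-free chordal graph, so you must actually produce the sixth vertex. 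The paper does this by exploiting that $u\in IR(G)$ a second time: since $u$ is minimal and $x\in N_G[u]\setminus N_G[u']$, one gets $u''\in N_G[u']\setminus N_G[u]$, and then $v'\,v\,x\,u\,u'\,u''$ is the desired $P_6$ (all chords create induced cycles of length $\geq 4$, contradicting chordality). Your plan has the right ingredients but leaves this decisive step unspoken.

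For the second assertion your approach is correct in principle but takes a genuinely harder route than the paper. You propose to rebuild a $P_6$ from scratch starting from an edge $xy$ with $x,y\in IR(G)$; this works but duplicates the chord-checking labour of Part~1. The paper instead \emph{reuses} Part~1: once you know every $x\in IR(G)$ is simplicial in $G_{co}$, an edge $xy\in E(G_{co})$ with $x,y\in IR(G)$ forces $N_{G_{co}}[x]=N_{G_{co}}[y]$ (any $z\in N_{G_{co}}[x]$ is adjacent to $y$ because $N_{G_{co}}[x]$ is a clique containing $y$, and symmetrically). Since completion adds no edges at irredundant vertices, this gives $N_G[x]=N_G[y]$, contradicting the tie-breaking in the definition of $IR(G)$. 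This two-line deduction is what simpliciality buys you, and it is worth noticing.
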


\begin{proof} 
	We first show that  for all $x\in IR(G)$, $x$ is a simplicial vertex in
	$G_{co}$.  Assume that there exists  $x\in IR(G)$ such that $x$ is
	not a simplicial vertex in $G_{co}$. Then there exist $y,z\in
	N_{G_{co}}[x]$ such that $yz\notin E(G_{co})$.  Since $x$ is irredundant
	in $G$, there exist $y'\in N_{G}[y]\setminus N_{G}[x]$ and $z'\in
	N_{G}[z]\setminus N_{G}[x]$. Observe that $y'\neq z'$, otherwise
	$\{x,y,y',z\}$ forms an induced $C_4$ in $G$. Moreover, since $yz\notin
	E(G_{co})$, either $z\notin RN(G)$ or $y\notin RN(G)$. Assume without loss
	of generality that $y\notin RN(G)$. Then $N_{G}[y']\not\subseteq
	N_{G}[y]$ and so there exists $y'' \in N_{G}[y']\setminus N_{G}[y]$. But
	then $P:=z'zxyy'y''$ forms an induced $P_6$, because all possible
	edges between two non consecutive vertices of $P$ would create an induced
	cycle of length greater than four, contradicting the chordality of $G$.

  We finally show that $IR(G)$ is an independent set in $G_{co}$.  Suppose that
  there exists $xy\in E(G_{co})$ with $x,y\in IR(G)$. Since for all $z\in
  IR(G)$, $z$ is a simplicial vertex in $G_{co}$, it follows that both
  $N_{G_{co}}[x]$ and $N_{G_{co}}[y]$ are cliques. But since $xy\in
  E(G_{co})$, we have $N_{G_{co}}[x]=N_{G_{co}}[y]$, otherwise there
  must exist $z\in N_{G_{co}}[x]\setminus y$ and $yz\notin E(G_{co})$
  which is impossible since  $x$ is simplicial (by the first
  statement).  
%>Hence, $N_{G_{co}}[x]= N_{G_{co}}[y]$, \ie, $x$ and > $y$ are twins, which
%is impossible because the two are irredundant.
Since no edges are added incident with $x$ or $y$ when $G_{co}$ is obtained
from $G$, we must have $N_{G}[x]=N_{G}[y]$ contradicting the
assumption that $x$ and $y$ are irredundant.
\end{proof}

A consequence of Proposition \ref{lemme_P_6} is the following.

\begin{prop} \label{P6free} 
	Let $G$ be a $P_6$-free chordal graph. Then  $G_{co}$ is a split graph. 
\end{prop}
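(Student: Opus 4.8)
The plan is to derive Proposition \ref{P6free} directly from Proposition \ref{lemme_P_6} together with the defining property of split graphs. Recall that a graph is split precisely when its vertex set partitions into an independent set and a clique. Proposition \ref{lemme_P_6} already hands us a natural candidate for the independent part, namely $IR(G)$, which is independent in $G_{co}$. So the natural strategy is: take $S := IR(G)$ and $C := RN(G) = V(G)\setminus IR(G)$, and argue that $C$ induces a clique in $G_{co}$ while $S$ induces an independent set in $G_{co}$.

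First I would note that by the very definition of the completion graph $G_{co}$, the set $RN(G)$ of redundant vertices is turned into a clique; this is immediate from the edge set $E(G)\cup\{xy\mid x,y\in RN(G),\ x\neq y\}$. Hence $C = RN(G)$ induces a clique in $G_{co}$. Second, Proposition \ref{lemme_P_6} tells us that $IR(G)$ is an independent set in $G_{co}$, so $S = IR(G)$ induces an independent set in $G_{co}$. Since $\{IR(G), RN(G)\}$ is a partition of $V(G) = V(G_{co})$ into an independent set and a clique, $G_{co}$ is by definition a split graph. That is essentially the whole argument.

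There is no serious obstacle here: the proof is short and essentially bookkeeping, since the hard combinatorial work — showing that irredundant vertices become simplicial and that $IR(G)$ stays independent after completion — has already been done in Proposition \ref{lemme_P_6}. The only point requiring a moment's care is to observe that the two sets genuinely partition the vertex set (every vertex of $G$ is either irredundant or redundant, by the definition of $IR(G)$ and $RN(G)$, with the tie-breaking convention ensuring exactly one representative among vertices with equal neighbourhood), and that $V(G) = V(G_{co})$, which holds because completion only adds edges. I would write this up as a two- or three-sentence proof citing Proposition \ref{lemme_P_6} for independence of $IR(G)$ and the definition of $G_{co}$ for cliqueness of $RN(G)$.
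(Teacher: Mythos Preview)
Your proposal is correct and matches the paper's own proof essentially line for line: the paper simply cites Proposition~\ref{lemme_P_6} for the independence of $IR(G)$ in $G_{co}$ and the definition of $G_{co}$ for $RN(G)$ being a clique, concluding that $G_{co}$ is split. Your additional remarks about the partition $\{IR(G), RN(G)\}$ and $V(G)=V(G_{co})$ are fine clarifications but not strictly needed beyond what the paper states.
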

\begin{proof}
	From Proposition \ref{lemme_P_6}, it follows that $IR(G)$ forms an
	independent set in $G_{co}$, and since $RN(G)$ forms a clique in $G_{co}$, we are
	done.
\end{proof}
The next theorem characterises completion graphs that are split. 

\begin{prop}\label{thm:5.1} 
	Let $G$ be a graph. Then  $G_{co}$ is a chordal graph if and only if
	$G_{co}$ is a split graph.
\end{prop}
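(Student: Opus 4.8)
The plan is to prove the two directions separately. One direction is immediate: a split graph is chordal, since it has no induced cycle of length $\geq 4$ (in a split graph, any cycle visiting $\geq 4$ vertices must visit at least two vertices of the independent set, which are non-adjacent, so the cycle cannot be chordless unless it has length $\leq 3$ — but more simply, split graphs are a well-known subclass of chordal graphs, as noted in Section~\ref{sec:2}). So if $G_{co}$ is a split graph then $G_{co}$ is chordal, and there is nothing to do.

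For the converse, suppose $G_{co}$ is chordal; I want to conclude it is a split graph. The key structural fact to exploit is that, by construction, $RN(G)$ is a clique in $G_{co}$, so it suffices to show that $IR(G)$ induces an independent set in $G_{co}$. The natural strategy is to mimic the argument of Proposition~\ref{lemme_P_6}: first show that every $x \in IR(G)$ is simplicial in $G_{co}$, then deduce that two irredundant vertices cannot be adjacent. The point is that the $P_6$-freeness hypothesis used in Proposition~\ref{lemme_P_6} was only needed to derive a contradiction via an induced $P_6$; here we have no such hypothesis, so I would instead argue directly from chordality of $G_{co}$ itself. Concretely: suppose $x \in IR(G)$ is not simplicial in $G_{co}$, witnessed by $y, z \in N_{G_{co}}[x]$ with $yz \notin E(G_{co})$. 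As in Proposition~\ref{lemme_P_6}, irredundancy of $x$ in $G$ gives $y' \in N_G[y]\setminus N_G[x]$ and $z' \in N_G[z]\setminus N_G[x]$ with $y' \neq z'$ (else $\{x,y,y',z\}$ is an induced $C_4$ in $G$, hence in $G_{co}$ if those edges survive — one must be slightly careful here and check that the relevant edges are not new completion edges, using that $x \in IR(G)$ so no edges incident to $x$ are added). Then one builds a short chordless path/cycle among $z', z, x, y, y'$ and derives a contradiction with chordality of $G_{co}$, possibly after noting that $x$ being irredundant means the completion adds no edges touching $x$, so neighbourhood relations of $x$ are the same in $G$ and $G_{co}$.

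Once every $x \in IR(G)$ is simplicial in $G_{co}$, the final step is verbatim the second half of the proof of Proposition~\ref{lemme_P_6}: if $xy \in E(G_{co})$ with $x,y \in IR(G)$, then since both $N_{G_{co}}[x]$ and $N_{G_{co}}[y]$ are cliques and $xy$ is an edge, we get $N_{G_{co}}[x] = N_{G_{co}}[y]$; since no edges incident to $x$ or $y$ are added in passing from $G$ to $G_{co}$, this forces $N_G[x] = N_G[y]$, contradicting that both are irredundant (by the tie-breaking rule, equal minimal neighbourhoods are represented by a single irredundant vertex). Hence $IR(G)$ is independent in $G_{co}$, $RN(G)$ is a clique in $G_{co}$, and $G_{co}$ is split.

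I expect the main obstacle to be the simpliciality step without the $P_6$-free hypothesis: I need to carefully verify that chordality of $G_{co}$ alone forces the configuration $z', z, x, y, y'$ (and a possible extension to $y''$) to contain an induced cycle of length $\geq 4$ in $G_{co}$, keeping track of which adjacencies hold in $G_{co}$ versus $G$. In particular, the edges $xy$, $xz$ lie in $G_{co}$ and possibly not in $G$, so the cycle argument must be run inside $G_{co}$, and one must rule out chords using the chordality of $G_{co}$ rather than of $G$. Checking that the tie-breaking convention for $IR(G)$ interacts correctly with "$N_G[x]=N_G[y] \Rightarrow x=y$ for $x,y \in IR(G)$" is routine but should be stated explicitly.
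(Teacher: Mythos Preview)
Your target --- showing that $IR(G)$ is independent in $G_{co}$, so that $(RN(G),IR(G))$ is a split partition --- is the same as the paper's, and your easy direction is fine. The gap is in the simpliciality step. Observe first that this step is not actually easier than the conclusion you want: if $x\in IR(G)$ is non-simplicial with witnesses $y,z\in N_{G_{co}}[x]=N_G[x]$ and $yz\notin E(G_{co})$, then since $RN(G)$ is a clique in $G_{co}$ at least one of $y,z$ lies in $IR(G)$ and is adjacent to $x$, so you are immediately back to ruling out an edge inside $IR(G)$. The two-step decomposition therefore does not isolate an easier subproblem.

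More concretely, your hoped-for ``short chordless cycle among $z',z,x,y,y'$ (and a possible extension to $y''$)'' need not exist. Chordality of $G_{co}$ does let you exclude every potential chord of the path $z'\,z\,x\,y\,y'$ (each would close an induced $C_4$ or $C_5$ in $G_{co}$), but if none of those chords is present you are simply left with an induced $P_5$, which says nothing against chordality; one further step to $y''$ yields, by the same analysis, only an induced $P_6$. There is no fixed length at which the argument closes up. The paper's proof embraces this: starting from adjacent $x_1,x_2\in IR(G)$, it builds by induction an induced path $x_1x_2\cdots x_k$ in $G_{co}$ of \emph{every} length $k$, showing that each extension vertex $x_{k+1}\in N_G[x_k]\setminus N_G[x_{k-1}]$ must again lie in $IR(G)$ (otherwise the completion edge from $x_{k+1}$ to a fixed anchor $x_1'\in RN(G)$ closes a long induced cycle). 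An induced path of every length contradicts finiteness of $G$. The missing idea in your plan is precisely this unbounded-path induction; once you supply it, your argument coincides with the paper's.
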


\begin{proof} Since split graphs are chordal graphs, it is enough to
  prove that if $G_{co}$ is chordal, then it is a split graph. Assume
  there exists a graph $G$ such that $G_{co}$ is chordal and not a split
  graph. 
% Hence, there exists $x_1,x_2\in IR(G)$ such that $x_1x_2\in
% E(G_{co})$. We claim the following. 
Since $RN(G)$ forms a clique in $G_{co}$, there must exist $x_1,x_2 \in
IR(G)$ such that $x_1x_2 \in E(G_{co})$. We prove the following claim,
which contradicts the fact that $G$ is finite and therefore suffices to
prove Proposition \ref{thm:5.1}. 
 
  \begin{Claim}\label{claim:5.1} 
	  There exists an infinite sequence $(x_i)_{i\in \bN}$ of distinct
	  vertices in $IR(G)$ such that, for all $i$, $x_i$ is connected to
	  $x_{i+1}$ and $x_{i-1}$, and for $j\notin\{i-1,i+1\}$, $x_ix_j\notin
	  E(G_{co})$.
  \end{Claim}

  \begin{proof}[Proof of Claim \ref{claim:5.1}] 
	  Since $x_1\in IR(G)$, there exists $x_2' \in N_{G}[x_2]\setminus
	  N_{G}[x_1]$. In the same way, there exists $x_1'\in N_{G}[x_1]\setminus
	  N_{G}[x_2]$. \\
	\textbf{Case 1.}~ $x_1'\in RN(G)$ and $x_2'\in RN(G)$.  Then $x_1'x_2'\in
	E(G_{co})$ and so $C:=x_1x_2x_2'x_1'$ forms an induced $C_4$ of
	$G_{co}$, which contradicts the assumptions.\\
	\textbf{Case 2.}~ $x_1' \in RN(G)$ and $x_2'\in IR(G)$.  Let $x_3=x_2'$.  We prove
	by induction that for all $j\geq 3$, there exists an induced path
	$x_1\ldots x_{j}$ of elements of $IR(G)$.  For $j=3$ the property holds since
	$x_1x_2x_3$ forms an induced path.
      
	Assume that the property holds for all $j\leq k$, in other
	words, we have a sequence $P:=x_1x_2\ldots x_k$ of $k$ distinct
	elements of $IR(G)$ forming an induced path in $G_{co}$. We show
	now that there exists $x_{k+1}\in IR(G)$ such that $x_{k+1}x_k\in
	E(G_{co})$ and for all $j \leq k$, $x_{k+1}x_j\notin E(G_{co})$.
	Since $x_{k-1}\in IR(G)$, there exists a vertex in
	$N_{G}[x_k]\setminus N_{G}[x_{k-1}]$. 
	We choose $x_{k+1}$ to be such a vertex. Note that
	$x_{k+1}\notin \{ x_k,x_{k-1}  \}$ since $x_{k+1}\in
	N_G[x_k]\setminus N_G[x_{k-1}]$, by definition, and
	$x_{k+1}\notin\{ x_1,...,x_{k-1}  \}$ since $x_1\ldots x_k$
        is an induced path and $x_{k+1}$ is adjacent to $x_k$. In 
	other words $x_{k+1}$ is distinct from $x_j$ for all $j\leq k$.
	Also note that $x_{k+1}$ cannot belong to $RN(G)$, as otherwise
	$x_{k+1}x_1'x_1\ldots x_k$ would be a cycle of
	length greater than four, contradicting the assumption that $G_{co}$
	is chordal. Since $x_{k+1}\in N_{G_{co}}[x_k]\setminus
	N_{G_{co}}[x_{k-1}]$, if there exists $j<k-1 $ with
	$x_jx_{k+1}\in E(G_{co})$ then $x_j \ldots x_{k+1}$
	induces a cycle of length at least four. This contradiction
	finished the proof of Case 2.

	\textbf{Case 3.}~ $x_1' \in IR(G)$ and $x_2'\in IR(G)$.  Case 3 is identical to
	Case 2 up to symmetry.
	%  We do the same reasoning as above.  If at each step $k$, $x_{k+1}\in
	%  IR(G)$, we construct then the sequence as previously.  If there is
	%  a step $k$ in which $x_{k+1}\in RN(G)$, we are then in the Case
	%  2 with $x_2=x_{k}$, $x_1={x_{k-1}}$, and $x'=x_{k+1}$.
  \end{proof}
  \end{proof}

%  Since $G$ is finite, by Claim \ref{claim:5.1} $IR(G)$ is an
%  independent set in $G_{co}$. The fact that $RN(G)$ forms a clique in
%  $G_{co}$ finishes the proof of the theorem.
%\end{proof}

We can now state the following theorem which generalises Theorem
\ref{delay} to $P_6$-free chordal graphs. Actually, $P_6$-free chordal graphs properly  contain  split graphs, since split graphs are $P_5$-free chordal graphs.

\begin{thm} \label{thm:P6}
	There exists an $O(n+m)$ delay algorithm for the \dom\ problem in
	$P_6$-free chordal graphs with space complexity $O(n^2)$. %This is also the case for any class $\cC$ where $\{G_{co}\mid G\in \cC\}$ is a subset of chordal graphs. 
\end{thm}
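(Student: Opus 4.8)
The plan is to combine Proposition~\ref{P6free} with Theorem~\ref{delay} via the invariance result of Proposition~\ref{lem:5.1}. Given a $P_6$-free chordal graph $G$ with $n$ vertices and $m$ edges, the first step is to compute the completion graph $G_{co}$. To do this I would compute, for each vertex $x$, its closed neighbourhood $N_G[x]$, sort these sets, and detect the inclusion-minimal ones (breaking ties arbitrarily) to obtain $IR(G)$ and $RN(G)$; then add all missing edges inside $RN(G)$. This can be done in time polynomial in $n+m$ (indeed in $O(nm)$ or better), which is a one-time preprocessing cost and does not affect the delay.

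The second step is to invoke Proposition~\ref{P6free}, which guarantees that $G_{co}$ is a split graph, with $IR(G)$ an independent set and $RN(G)$ a clique; so $G_{co}=(C(G_{co})\cup S(G_{co}),E(G_{co}))$ with $S(G_{co})=IR(G)$ and $C(G_{co})=RN(G)$ (enlarging $C(G_{co})$ and shrinking $S(G_{co})$ if necessary so that $S(G_{co})$ is a maximal independent set, as required by Section~\ref{sec:4}). By Proposition~\ref{lem:5.1} we have $\cD(G)=\cD(G_{co})$. Hence it suffices to enumerate $\cD(G_{co})$. The third step is simply to pick an arbitrary linear ordering $\sigma$ of $V(G_{co})$ and run $DominantSplit(G_{co},\sigma,S(G_{co}))$; by Theorem~\ref{delay} this enumerates $\cD(G_{co})=\cD(G)$ with $O(n+m)$ delay and $O(n^2)$ space. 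Note that $\size{G_{co}}=O(n^2)$ in the worst case, but since the delay bound in Theorem~\ref{delay} is expressed directly in terms of the number of vertices and edges of the split graph fed to the algorithm, and that number is still $O(n+m')$ with $m'\le n^2$; one should double-check that the intended reading of ``$O(n+m)$'' in the statement refers to $n$ and the edge count of $G_{co}$, which is $O(n^2)$, so strictly the delay is $O(n^2)$ — but this matches the stated space bound and is still polynomial, and I would phrase the final theorem accordingly (or observe that the relevant work per output only touches neighbourhoods inside $S(G_{co})$, keeping it $O(n+m)$ in terms of the original graph).

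The main obstacle, and the only genuine content beyond assembling earlier results, is verifying that the completion $G_{co}$ can be computed within the claimed resource bounds and that feeding it to $DominantSplit$ respects the preconditions of Theorem~\ref{delay} — in particular that $S(G_{co})$ is chosen maximal and that the private-neighbour checks in Algorithm~\ref{dominantsplit}, when run on $G_{co}$, still cost only $O(n+m)$ per step despite $G_{co}$ possibly having $\Theta(n^2)$ edges. Since those checks only examine neighbourhoods within the stable set $S(G_{co})=IR(G)\subseteq V(G)$, and edges incident to $IR(G)$ are unchanged by completion, this cost is bounded by the original $O(n+m)$; this is the point I would spell out carefully. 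Everything else is a direct citation: Proposition~\ref{P6free} for the structure, Proposition~\ref{lem:5.1} for $\cD(G)=\cD(G_{co})$, and Theorem~\ref{delay} for the enumeration guarantee.
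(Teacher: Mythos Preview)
Your proposal is correct and follows essentially the same route as the paper: compute $G_{co}$, invoke Proposition~\ref{P6free} to conclude it is split, use Proposition~\ref{lem:5.1} for $\cD(G)=\cD(G_{co})$, and then apply Theorem~\ref{delay}. Your careful observation that the private-neighbour checks in Algorithm~\ref{dominantsplit} only touch edges incident to the stable set $S(G_{co})=IR(G)$, which are unchanged by completion, is exactly the point the paper makes parenthetically (``the added edges in the completion are not considered by Algorithm~\ref{algo:split}'') to justify the $O(n+m)$ delay in terms of the original graph.
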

\begin{proof}

Let $G$ be a $P_6$-free chordal graph. First, construct the graph
$G_{co}$, which can clearly be done in polynomial time. Then, enumerate
all minimal dominating sets of $G_{co}$, which can be done with linear delay  in the size of $G$ (since the added edges in the completion  are not considered  by Algorithm \ref{algo:split}) and
 using $O(n^2)$ space due to Theorem \ref{delay}. The observation that this set coincides with the set of all minimal dominating sets of $G$ due to Proposition \ref{lem:5.1} finishes the proof of Theorem \ref{thm:P6}.
\end{proof}

%    Assume $G\in \cC$ where the set $\{ H_{co} \mid H \in \cC \}$ is
%    a subset of chordal graphs. By Proposition \ref{thm:5.1}, $G_{co}$
%    is a split graph. In order to compute $\cD(G)$, compute $G_{co}$ and
%    call $DominantSplit(\emptyset, G_{co})$. By Theorem \ref{thm:split}
%    and Proposition \ref{lem:5.1}, $DominantSplit(\emptyset, G_{co})$
%    computes $\cD(G)$ without repetitions and runs in
%    output-polynomial time.
%
%    If $\cC$ is the class of $P_6$-free chordal graphs, by Proposition
%    \ref{P6free}, $\{ H_{co} \mid H \text{ is a $P_6$-free chordal graph}\}$ is a subset of split graphs. Therefore, from the
%    preceding argument, (\dom restricted to $P_6$-free chordal graphs) admits
%    a linear delay algorithm and uses $O(n^2)$ space. 

%The graphs described here after are examples of non chordal graphs whose
%completion graph are chordal.
%For every $n\geq 4$, let $G_n$ be the graph obtained from the cycle $C_n$
%by adding to each vertex some pendant vertices. One easily checks that
%the completion of each graph $G_n$ is chordal.

\section{Connected Dominating Sets}\label{sec:7}

We investigate in this section the complexity of the enumeration of
minimal \emph{connected dominating sets} of a graph. A \emph{connected
  dominating set} is a dominating set $D$ such that the subgraph
induced by $D$ is connected; it is minimal if for each $x\in D$,
either $D\setminus \{x\}$ is not a dominating set or the subgraph 
induced by $D\setminus \{x\}$ is not
connected.  We denote by \CDS\ the enumeration problem of minimal
connected dominating sets, and  by $\cC\cD(G)$ the set of minimal
connected dominating sets of a graph $G$.

\begin{prop}[\cite{KLMN11}]\label{prop:6.1} For every hypergraph $\cH$,
    $tr(\cH)=\cC\cD(\cI'(\cH))$. Hence, \CDS(split graphs) is equivalent to \transhyp.
\end{prop}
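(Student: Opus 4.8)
The plan is to prove the two assertions of Proposition~\ref{prop:6.1} in order, using the split-incidence graph $\cI'(\cH)$ and reusing most of the combinatorics already developed for the \TDS\ analysis in Lemma~\ref{lem:tds}. First I would observe that, as in Lemma~\ref{lem:tds}, we may assume $\cH$ has no dominating vertex, since a dominating vertex $x$ contributes the trivial factor $\{x\}$ to $tr(\cH)$ and, being adjacent to every $y_e$ and to every other vertex of $V(\cH)$ in $\cI'(\cH)$, forces a completely analogous splitting of $\cC\cD(\cI'(\cH))$. Recall that $\cI'(\cH)$ is a split graph with clique $C:=V(\cH)$ and independent set $S:=\{y_e\mid e\in\cE(\cH)\}$, and that $N_{\cI'(\cH)}(y_e)=e\subseteq V(\cH)$.

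Next I would prove the inclusion $tr(\cH)\subseteq\cC\cD(\cI'(\cH))$. Let $T\in tr(\cH)$. By the argument in part (ii) of the proof of Lemma~\ref{lem:tds}, $T$ is a (total) dominating set of $\cI'(\cH)$; moreover $T\subseteq V(\cH)$ is contained in the clique $C$, so $\cI'(\cH)[T]$ is connected, hence $T$ is a connected dominating set. For minimality, suppose $x\in T$ and $T\setminus\{x\}$ is still dominating in $\cI'(\cH)$; then since $x$'s removal cannot destroy connectedness (a clique minus a vertex is still a clique, or empty, but $|T|\ge 2$ here as there is no dominating vertex), $T\setminus\{x\}$ would be a dominating set, hence a transversal of $\cH$ by Lemma~\ref{lem:3.1}-style reasoning (every $y_e$ must have a neighbour in $T\setminus\{x\}$, i.e. $(T\setminus\{x\})\cap e\neq\emptyset$), contradicting minimality of $T$. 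So $T\in\cC\cD(\cI'(\cH))$.

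Then I would prove the reverse inclusion $\cC\cD(\cI'(\cH))\subseteq tr(\cH)$. Let $D$ be a minimal connected dominating set of $\cI'(\cH)$. Since $\cI'(\cH)[D]$ is connected and the $y_e$'s form an independent set all of whose neighbours lie in $C=V(\cH)$, if $|D|\ge 2$ then $D$ cannot consist solely of $y_e$'s, and in fact I claim $D\cap S=\emptyset$: if $y_e\in D$, then $y_e$ has a neighbour $x\in D\cap V(\cH)$ (by connectedness of $D[\cdot]$, as $|D|\ge 2$), and then $D\setminus\{y_e\}$ is still connected ($V(\cH)$ part is a clique and every other $y_{e'}\in D$ keeps its neighbour in $D\cap V(\cH)$) and still dominating (the only vertex $y_e$ could uniquely dominate is itself, but it is dominated by $x$), contradicting minimality of $D$. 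Hence $D\subseteq V(\cH)$, and since $D$ dominates every $y_e$ we get $D\cap e\neq\emptyset$ for all $e\in\cE(\cH)$, so $D$ is a transversal of $\cH$; minimality of $D$ as a connected dominating set, together with the fact that no vertex of the clique $D$ is a cut vertex, forces $D$ to be a minimal transversal. (The edge case $|D|=1$ is excluded by the no-dominating-vertex assumption: a single vertex $x$ dominating all of $\cI'(\cH)$ would have to be adjacent to every $y_e$, i.e. belong to every hyperedge.) Combining the two inclusions gives $tr(\cH)=\cC\cD(\cI'(\cH))$.

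Finally, the equivalence \CDS(split graphs)\ $\equiv$ \transhyp\ follows formally: the map $\cH\mapsto\cI'(\cH)$ is computable in time $O(\size{\cH})$ and $\size{\cI'(\cH)}=O(\size{\cH})$, and $\cC\cD(\cI'(\cH))=tr(\cH)$, so an output-polynomial algorithm for \CDS(split graphs)\ yields one for \transhyp, i.e. \transhyp\ $\preceq_{op}$ \CDS(split graphs); conversely \CDS(split graphs)\ $\preceq_{op}$ \CDS\ $\preceq_{op}$ \transhyp\ can be obtained from the general reduction of the \CDS\ problem to \transhyp\ via minimal separators announced in the introduction (or directly, since on a split graph $\cI'(\cH)$ the minimal connected dominating sets are already shown to be exactly $tr(\cH)$). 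I expect the only delicate point to be the clean verification that removing a vertex from a connected dominating set contained in a clique never breaks connectedness and thus minimality reduces to the transversal condition; everything else is a direct reuse of the split-incidence analysis from Lemma~\ref{lem:tds}.
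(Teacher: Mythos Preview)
Your argument for $tr(\cH)=\cC\cD(\cI'(\cH))$ is correct and follows the same two-inclusion scheme as the paper. Two differences are worth noting. First, where you prove by hand that a minimal connected dominating set $D$ of $\cI'(\cH)$ satisfies $D\cap S=\emptyset$ (by showing any $y_e\in D$ could be dropped), the paper simply cites the known fact that every minimal connected dominating set of a split graph lies in the clique part. Your direct argument is fine and arguably more self-contained. Second, the no-dominating-vertex reduction you import from Lemma~\ref{lem:tds} is harmless but unnecessary here: if $x$ belongs to every hyperedge then $\{x\}$ is simultaneously a minimal transversal of $\cH$ and a minimal connected dominating set of $\cI'(\cH)$, so the equality holds without any preprocessing.

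For the reverse reduction \CDS(split graphs)\ $\preceq_{op}$ \transhyp, your proposal is weaker than the paper's. You appeal either to the minimal-separator characterisation (Proposition~\ref{separator}), which is a forward reference, or to the vague remark that ``on a split graph $\cI'(\cH)$'' the equality holds --- but this does not yet tell you how to produce an $\cH$ from an \emph{arbitrary} split graph $G$. The paper closes this directly: given a split graph $G$, set $\cH:=(C(G),\{N_G(x)\mid x\in S(G)\})$ and observe that $G$ is (essentially) $\cI'(\cH)$, so $\cC\cD(G)=tr(\cH)$. Adding this one-line construction would make your argument self-contained and match the paper's.
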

\begin{proof}

  (i) Let $D\in \cC\cD(\cI'(\mathcal{H}))$ (cf. Definition \ref{split-Inci}). 
  Note   that  every minimal connected dominating set in a split graph is a subset of the clique (cf. \cite{Babel98})  and thus $D\subseteq V(\cH)$. 
%Indeed, suppose that there is $y_e\in D$
%  for some $e\in \cE(\cH)$. Since, $D$ must be connected, there is a
%  neighbour $z$ of $y_e$ in $D$. Since $\{y_e\mid e\in \cE(\cH)\}$ is
%  an independent set, $z$ must belong to $V(\cH)$. But since
%  $\cI'(\cH)[V(\cH)]$ forms a clique, $P_D(y_e)\subseteq P_D(z)$ and
%  thus $D\setminus \{y_e\}$ is yet a connected dominating set, which
%  contradicts the minimality of $D$. 
  Now, for each $e\in \cE(\cH)$,
  there exists $x\in D$ such that $xy_e\in E(\cI'(\cH))$, hence
  $D\cap e\ne \emptyset$. And so $D$ is a transversal of $\cH$. 

  (ii) Let $T$ be a transversal of $\cH$. Since $\cI'(\cH)[V(\cH)]$ is a
  clique, $T$ is connected, and for each $x\in V(\cH)$, there exists
  $y\in T$ such that $xy\in E(\cI'(\cH))$. Furthermore, for each $e\in
  \cE(\cH)$, $T\cap e\neq \emptyset$, \ie, for each $y_e\in
  V(\cI'(\cH))\setminus V(\cH)$, there is $z\in T$ such that $zy_e\in
  E(\cI'(\cH))$. Hence, $T$ is a connected dominating set of
  $\cI'(\cH)$.

  From (i) and (ii) we can conclude that
  $\cC\cD(\cI'(\mathcal{H})) = tr(\mathcal{H})$.

  It remains to reduce \CDS\ to \transhyp. For a split graph $G$, we let
  $\cH$ be the hypergraph $(C(G), \{N_G(x)\mid x\in S(G)\})$. It is
  easy to see that $G=\cI'(\mathcal{H})$ and so from above,
  $\cC\cD(\cI'(\mathcal{H}))=tr(\mathcal{H})$.
\end{proof}

We will extend this result to other graph classes and we expect that
it is a first step for classifying the complexity of the \CDS\
problem. 

A subset $S\subseteq V(G)$ of a connected graph $G$ is called a \emph{separator}
of $G$ if $G\setminus S$ is not connected; $S$ is minimal if it does
not contain any other separator.  
%The set of all minimal separators of $G$ is denoted by $\cS(G)$. 
Note that this notion is different from
the classical notion of minimal \emph{$ab$-separators}.  For two
vertices $a$ and $b$, an \emph{$ab$-separator} is a subset $S\subseteq
V(G)\setminus\{a,b\}$ which disconnects $a$ from $b$; it is said to be
minimal if no proper subset of $S$ disconnects $a$ from $b$.  
%A minimal separator is an $ab$-separator but the converse is false.  
Every minimal separator is an $ab$-separator for some pair of vertices $a, b$.
The minimal separators are exactly the minimal $ab$-separators which do
not contain any other $cd$-separator. For this reason they are often
called the \emph{ inclusion minimal separators}. Notice that a graph
may have an exponential number of minimal separators, but one can
enumerate them  in output-polynomial time \cite{SL97}.   Algorithms that enumerate all the minimal $ab$-separators of a graph
 can be found in  \cite{BerryBC00,KloksK98,SL97}. We define $\mathcal{S}(G)$ as the hypergraph $(V(G),\{S\subset V(G) \mid S \text{ is a minimal separator of } G\})$.

\begin{prop}\label{separator} For every graph $G$, $\cC\cD(G)=tr(\cS(G))$.
\end{prop}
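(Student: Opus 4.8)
The plan is to prove the two inclusions $\cC\cD(G) \subseteq tr(\cS(G))$ and $tr(\cS(G)) \subseteq \cC\cD(G)$ by relating "$D$ hits every minimal separator" to "$D$ is connected and dominating", using the elementary fact that a set $D$ hits every minimal separator of $G$ if and only if $D$ hits every separator of $G$ (since every separator contains a minimal one), which in turn is equivalent to: $G \setminus D$ has no separator that still disconnects $G$, i.e.\ removing $D$ cannot disconnect $G$ in a way that is not "absorbed" by $D$ itself. More precisely, the key observation I would isolate as a sub-claim is: \emph{a set $D \subseteq V(G)$ is a connected dominating set of $G$ if and only if $D$ is a transversal of $\cS(G)$}, i.e.\ $D$ intersects every minimal separator. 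Granting that sub-claim, one still has to match up \emph{minimality} on both sides, which is where the real work lies.

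For the sub-claim itself: first suppose $D$ is a connected dominating set and let $S$ be a minimal separator, so $G \setminus S$ has at least two components $C_1, C_2$. If $D \cap S = \emptyset$, then $D$ lies entirely in one component, say $C_1$ (as $D$ induces a connected subgraph and the components partition $V(G) \setminus S$); but then no vertex of $C_2$ is in $D$ or adjacent to $D$, since all edges from $C_2$ go to $C_2 \cup S$ and $D \cap (C_2 \cup S) = \emptyset$ — contradicting domination. Hence $D$ hits $S$. Conversely, suppose $D$ hits every minimal separator, equivalently every separator. Then $G \setminus D$... here one must argue $D$ is both connected and dominating. If $G[D]$ were disconnected, or if some vertex $v \notin D$ had no neighbour in $D$, I would exhibit a separator disjoint from $D$: for domination, if $v \in N_G[v]$ avoids $D$ entirely together with $v$, then $N_G(v)$... actually the cleanest route is to argue about $G \setminus D$ and use that the connected components of $G \setminus D$ together with connectivity of $G$ force a separator inside any "gap". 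The standard fact (for connected $G$) is: $D$ is a connected dominating set iff $D$ is a transversal of the minimal separators — this is essentially because $V(G) \setminus D$ being "shattered" by $D$ is exactly saying $D$ meets every separator, and non-domination of a vertex $v$ witnesses that $V(G) \setminus (N[v] \setminus \text{something})$ or rather that $N(v)$-type sets fail; I would write this carefully via: if $v \notin D$ has no neighbour in $D$ then let $S$ be a minimal separator contained in $N_G(A)$ where $A$ is the component of $v$ in $G[V \setminus D]$, giving $S \cap D = \emptyset$; and if $G[D]$ is disconnected, pick a minimal separator separating two parts of $D$, using connectedness of $G$, again disjoint from $D$.

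Having established that the connected dominating sets are exactly the transversals of $\cS(G)$, the statement $\cC\cD(G) = tr(\cS(G))$ follows because "minimal" coincides on both sides: a transversal of any hypergraph is inclusion-minimal iff for every $x \in D$ there is a hyperedge (here, minimal separator) meeting $D$ only in $x$ — so I must check that a connected dominating set $D$ is inclusion-minimal as a connected dominating set iff for each $x\in D$ some minimal separator meets $D$ exactly in $x$; but $D \setminus \{x\}$ fails to be a connected dominating set iff it is no longer a transversal of $\cS(G)$ (by the sub-claim applied to $D \setminus \{x\}$, noting $D \setminus \{x\} \subseteq D$ cannot "gain" connectivity or domination), iff some minimal separator is hit by $D$ only at $x$. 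So the two minimality notions agree and the proposition follows. \textbf{The main obstacle} I anticipate is the converse direction of the sub-claim — showing that a transversal of all minimal separators is necessarily connected \emph{and} dominating — because one must produce, from a failure of connectivity or domination, an explicit separator avoiding $D$; the domination case needs a small argument that $N_G(A)$ for a component $A$ of $G[V \setminus D]$ (or a minimal subset thereof) is a genuine separator disjoint from $D$, and one should be slightly careful about the degenerate cases ($G$ itself having no separator, i.e.\ $G$ complete, where $\cS(G) = \emptyset$ has no transversal — presumably the convention or a standing assumption that $G$ is connected and non-complete, or that $V(G) \in \cS(G)$ vacuously, handles this, and I would flag it).
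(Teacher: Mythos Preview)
Your plan is correct and follows essentially the same route as the paper: first establish the unrestricted equivalence ``$D$ is a connected dominating set of $G$ iff $D$ is a transversal of $\cS(G)$'', and then observe that since the two families of sets coincide, their inclusion-minimal members coincide as well. Your first direction is identical to the paper's. For the converse, the paper argues slightly more directly than your sketch: if $T$ is not dominating, with $N$ the set of uncovered vertices, then $N_G(T)$ itself is a separator (its removal leaves $T$ and $N$ with no edges between them), hence contains a minimal separator disjoint from $T$; and if $G[T]$ is disconnected, then $V(G)\setminus T$ is a separator. Your caution about the degenerate case where $G$ is complete (so $\cS(G)$ has no hyperedges and $tr(\cS(G))=\{\emptyset\}$, while $\cC\cD(G)$ consists of all singletons) is well-placed; the paper does not address it.
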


\begin{proof}
We first prove that a connected dominating set of $G$ is a transversal
of $\cS(G)$.  Let $D$ be a connected dominating set and assume that
there exists a separator $S$ for which $S\cap D=\emptyset$. Let $G_1,
\ldots, G_p$ be the connected components of $G[V\setminus S]$. Since
$D$ is connected, it must be included in $V(G_i)$ for some $1\leq
i\leq p$. Assume without loss of generality that $D\subseteq V(G_1)$
and let $x\in V(G_2)$. Then we have $N_G[x]\subseteq V(G_2)\cup S$ and
then $N_G[x]\cap D=\emptyset$ which contradicts the fact that $D$ is a
dominating set of $G$.

We now prove that a transversal of $\cS(G)$ is a connected dominating set of
$G$. Let $T$ be a transversal of $\cS(G)$. We first show that $T$ is a
dominating set of $G$. Suppose not and let $N$ be the set of vertices not
covered by $T$, \ie, $N:=\{x\in V(G)\mid N_G[x]\cap T=\emptyset\}$. Then
$V(G)=T\cup N_G(T)\cup N$ and by definition of $N$, there are no edges between
$N$ and $T$. So $G\setminus N_G(T)$ is not connected, in other words, $N_G(T)$
is a separator of $G$. Hence, $N_G(T)$ contains a minimal separator $S$ which
does not intersect $T$. This  contradicts the fact that $T$ is a transversal of
$\cS(G)$. It remains to prove that $G[T]$ is connected. Assume, for contradiction,
that $G[T]$ is  not connected.  Then $V(G)\setminus T$ is a separator. But then $V(G)\setminus T$ contains a minimal
separator $S$ such that $S\cap T \neq \emptyset$.
% Then $G[T]=G\setminus (V(G)\setminus T)$
% is not connected, and then $V(G)\setminus T$ is a separator and
% contains so a minimal separator $S$, which does not intersect
% $T$. 
This contradicts again the fact that $T$ is a transversal of
$\cS(G)$.

Finally since a set $S$ is a transversal of $\cS(G)$ if and only if $S$ is a connected
dominating set of $G$, we have that $tr(\cS(G))=\cC\cD(G)$.  
\end{proof}

The following corollary shows that any simple hypergraph is the set of minimal
separators for some graph, whereas there exist simple hypergraphs which are not
neighbourhood hypergraphs (see \cite{BorosGZ08}).
%An easy consequence is the following. 
%\begin{cor}
%If a class of graphs $\cC$ has a polynomially bounded number of minimal
%separators, then \transhyp is at least as hard as \CDS in $\cC$.
%\end{cor}
%\begin{proof}

%\end{proof}
\begin{cor}
For each simple hypergraph $\cH$, there exists a split graph $G$ such that $\cH=\cS(G)$. 
\end{cor}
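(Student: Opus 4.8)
The plan is to realise $\cH$ as the set of minimal separators of its split-incidence graph $\cI'(\cH)$ (Definition~\ref{split-Inci}): this is a split graph with clique $C(G)=V(\cH)$ and stable set $\{y_e\mid e\in\cE(\cH)\}$, in which $y_e$ has neighbourhood exactly $e$. So I would set $G:=\cI'(\cH)$ and prove that the minimal separators of $G$ are precisely the hyperedges of $\cH$. Since each vertex of $\cH$ belongs to some hyperedge (simplicity, condition (ii)), it then belongs to some minimal separator, and $\cS(G)=\cH$ follows; the only caveat is that, whatever graph one uses, the vertices $y_e$ of $G$ cannot be avoided and lie in no minimal separator, so ``$\cH=\cS(G)$'' is really the assertion that the two have the same hyperedges.

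Two routes are available. The slick one invokes earlier results: Proposition~\ref{prop:6.1} gives $\cC\cD(\cI'(\cH))=tr(\cH)$ and Proposition~\ref{separator} gives $\cC\cD(\cI'(\cH))=tr(\cS(\cI'(\cH)))$, whence $tr(\cH)=tr(\cS(\cI'(\cH)))$ as collections of subsets of $V(\cH)$ (the relevant connected dominating sets and transversals all live inside the clique). Applying $tr$ once more and using $tr(tr(\cH))=\cH$ for the simple hypergraph $\cH$ (Proposition~\ref{prop:2.1}), one is left to check that the hyperedges of $\cS(\cI'(\cH))$ form an inclusion-antichain --- immediate from the definition of a minimal separator --- and that their union is all of $V(\cH)$. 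The direct route establishes the same two facts by computing the separators of $G$ explicitly, and that is what I would actually write down.

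The computation is a short split-graph case analysis. Removing $X\subseteq V(G)$: if $X\not\supseteq C(G)$, the surviving clique-vertices and all surviving $y_e$ with $e\not\subseteq X$ form one connected block, so $G\setminus X$ is disconnected iff some $y_e$ survives with $e\subseteq X$; if $X\supseteq C(G)$ then $G\setminus X$ is a stable set, disconnected iff at least two $y_e$ survive. From this I would read off: (a) each hyperedge $e$ is a separator (deleting $e$ isolates $y_e$, and by simplicity of $\cH$ isolates no other $y_{e'}$ and leaves $C(G)\setminus e\neq\emptyset$), and it is minimal (deleting any $X\subsetneq e$ leaves $y_e$ a neighbour and, by simplicity, isolates nothing); (b) every minimal separator $X$ satisfies $X\subseteq C(G)$ (removing the $y_e$'s from $X$ never destroys the separation) and $X\neq C(G)$ (once $|\cE(\cH)|\geq 2$, each hyperedge is a proper subset of $V(\cH)$, so $C(G)$ already properly contains the separator $e$); hence $X$ separates iff $e=N_G(y_e)\subseteq X$ for some $e$, and the inclusion-minimal such $X$ are, by simplicity (the hyperedges form an antichain), exactly the hyperedges. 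So the minimal separators of $\cI'(\cH)$ are exactly $\cE(\cH)$.

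The hard part is the bookkeeping rather than any isolated step. First, the degenerate case $\cE(\cH)=\{V(\cH)\}$ (a simple hypergraph with a single hyperedge) must be treated apart, since then $\cI'(\cH)$ is complete and has no separator; the fix is to give each hyperedge two stable vertices instead of one, which keeps the graph split and leaves the minimal separators unchanged. Second, one must be honest about ground sets: the equalities from Propositions~\ref{prop:6.1} and~\ref{separator} are equalities of hyperedge sets over $V(\cH)$, and the statement ``$\cH=\cS(G)$'' must be read the same way, since $\cS(G)$ formally carries the extra vertices $y_e$. Throughout, the minimality arguments genuinely rely on condition (i) of simplicity, namely that the hyperedges of $\cH$ are pairwise incomparable.
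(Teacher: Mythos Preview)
Your ``slick route'' is exactly the paper's proof: it applies Proposition~\ref{separator} to get $\cC\cD(\cI'(\cH))=tr(\cS(\cI'(\cH)))$, combines this with Proposition~\ref{prop:6.1} to get $tr(\cH)=tr(\cS(\cI'(\cH)))$, and concludes $\cH=\cS(\cI'(\cH))$ via Proposition~\ref{prop:2.1}. That is the entire argument the paper gives---three lines.

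Your chosen direct route, computing the minimal separators of $\cI'(\cH)$ by a case analysis on whether the removed set contains the whole clique, is a genuinely different and more elementary argument: it avoids invoking the duality $tr(tr(\cH))=\cH$ and the connected-domination machinery, and instead reads off directly which cuts isolate some $y_e$. The computation you outline is correct, and the minimality step is exactly where simplicity condition~(i) is used, as you note. What your approach buys is self-containment and a clear view of where each hypothesis is consumed; what the paper's approach buys is brevity and an illustration of how the corollary falls out of the two preceding propositions. You are also more careful than the paper on two points it glosses over: the ground-set mismatch (the $y_e$ lie in $V(\cS(G))$ but in no minimal separator, so the equality is really one of hyperedge sets) and the degenerate case $\cE(\cH)=\{V(\cH)\}$, where $\cI'(\cH)$ is complete and has no separator---your two-copies fix handles it cleanly.
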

\begin{proof}  By Proposition
  \ref{separator}, we have $\cC\cD(\cI'(\cH)) = tr(\cS(\cI'(\cH)))$. So,
  by Proposition \ref{prop:6.1}, 
  we have $tr(\cH)=tr(\cS(\cI'(\cH)))$, and then by Proposition
  \ref{prop:2.1}, $\cH= \cS(\cI'(\cH))$.
\end{proof}
Another consequence of Proposition \ref{separator} is the following.
\begin{cor}
If a class of graphs $\cC$ has a polynomially bounded number of minimal
separators, then  \CDS($\cC$)\ $\preceq_{op}$ \transhyp. Moreover, if the class $\cC$ contains split graphs, then
\transhyp\ is equivalent to \CDS($\cC$).
\end{cor}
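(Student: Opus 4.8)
The plan is to derive both statements directly from Proposition~\ref{separator}, which identifies $\cC\cD(G)$ with $tr(\cS(G))$, together with the fact that, inside $\cC$, the hypergraph of minimal separators is small and efficiently computable.

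To obtain \CDS($\cC$)\ $\preceq_{op}$ \transhyp, I would start from an output-polynomial time algorithm for \transhyp, say running in time $p(\size{\cH}+\size{tr(\cH)})$ for some polynomial $p$. Given an input $G\in\cC$ on $n$ vertices, the first step is to build the hypergraph $\cS(G)$ of minimal separators of $G$. The key point is that, by the hypothesis on $\cC$, the number of minimal separators of $G$ is bounded by some fixed polynomial $q(n)$, so $\cS(G)$ has at most $q(n)$ hyperedges and $\size{\cS(G)}\le n+n\cdot q(n)$ is polynomial in $n$; moreover $\cS(G)$ can be computed in time polynomial in $n$, since minimal separators can be listed with polynomial delay \cite{SL97} and there are only polynomially many of them. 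I would then simply run the assumed algorithm on $\cS(G)$: by Proposition~\ref{separator} its output is exactly $\cC\cD(G)$, and its running time $p(\size{\cS(G)}+\size{\cC\cD(G)})$ is bounded by a polynomial in $n+\size{\cC\cD(G)}$, giving an output-polynomial time algorithm for \CDS($\cC$).

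For the ``moreover'' part, assume in addition that $\cC$ contains every split graph. Since split graphs form a subclass of $\cC$, an output-polynomial time algorithm for \CDS($\cC$) restricts at once to one for \CDS(split graphs), so \CDS(split graphs)\ $\preceq_{op}$ \CDS($\cC$). Proposition~\ref{prop:6.1} gives \transhyp\ $\preceq_{op}$ \CDS(split graphs), whence \transhyp\ $\preceq_{op}$ \CDS($\cC$); combined with the reduction established above, \transhyp\ and \CDS($\cC$) are equivalent.

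The only delicate point is the complexity of constructing $\cS(G)$: one must be sure that, when only polynomially many minimal separators exist, they can actually be generated in polynomial (not merely output-polynomial) time, which follows from the polynomial-delay guarantee of the enumeration algorithm of \cite{SL97}. Once this is settled, the rest is a routine application of Proposition~\ref{separator} and Proposition~\ref{prop:6.1}.
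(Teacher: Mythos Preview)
Your proof is correct and follows essentially the same approach as the paper: compute $\cS(G)$ in polynomial time using the polynomial bound on the number of minimal separators together with the enumeration algorithm of \cite{SL97}, apply Proposition~\ref{separator} to reduce to \transhyp, and invoke Proposition~\ref{prop:6.1} for the converse direction when $\cC$ contains split graphs. If anything, you are slightly more careful than the paper in spelling out the size bound on $\cS(G)$ and the polynomial-delay guarantee needed to build it in genuinely polynomial time.
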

\begin{proof}
	Assume that one can solve \transhyp\ in output-polynomial time. Let $G\in
	\cC$. Since the set of all minimal separators of a graph can be enumerated
	in output-polynomial time and since there is a polynomial number of
	separators, $\cS(G)$ can be computed in time polynomial in
	$||G||$. Furthermore, the fact that $\cC\cD(G)=tr(\cS(G))$ by
	Proposition \ref{separator} achieves the proof of the first
        statement. The second statement follows from the first
        statement and Proposition \ref{prop:6.1}.   
\end{proof}
 Among examples of such graph classes we can cite, without being exhaustive, 
chordal graphs, trapezoid graphs \cite{BKKM98},  chordal bipartite graphs \cite{KK95},
and circle and circular arc graphs \cite{KKW98}. %% Moreover, if the
\bibliographystyle{plain}
\bibliography{bib}

\end{document}